\newtheorem*{prop}{Proposition}
\newtheorem*{lemma}{Lemma}
\newtheorem*{theorem}{Theorem}
\numberwithin{equation}{section}
\title{Modified method of simplest equation for obtaining exact analytical solutions of nonlinear partial differential equations: Further development of
methodology with two applications}
\author{Nikolay K. Vitanov$^{1,2,}$\footnote{corresponding author, e-mail: vitanov@imbm.bas.bg}, Zlatinka I. Dimitrova$^3$, Kaloyan N. Vitanov$^1$}
\date{$^1$ Institute of Mechanics, Bulgarian Academy of Sciences, Acad. G. Bonchev Str., Bl. 4,
1113 Sofia, Bulgaria\\
$2$ Max-Plank Institute for the Physics of Complex Systems, N{\"o}thnitzer Str. 38, 01187
Dresden, Germany\\
$3$ ''G. Nadjakov" Institute of Solid State Physics, Tzarigradsko Chaussee Blvd. 72, 1784 Sofia,
Bulgaria}
\begin{document}
\maketitle
\begin{abstract}
We discuss the application of a variant of  the method of simplest equation  
for obtaining exact traveling wave solutions of a  class of nonlinear partial 
differential equations containing polynomial nonlinearities. As simplest equation 
we use differential equation for a special function that contains as particular 
cases trigonometric and hyperbolic functions as well as the elliptic function of 
Weierstrass and Jacobi. We show that for this case the studied class of nonlinear 
partial differential equations can be reduced to a system of two equations containing 
polynomials of the unknown functions. This system may be further reduced to
a system of nonlinear algebraic equations for the parameters of the solved equation and
parameters of the solution. Any nontrivial solution of the last system leads to a 
traveling wave solution of the solved nonlinear 
partial differential equation. The methodology is illustrated by obtaining solitary
wave solutions for the generalized Korteweg-deVries equation and by obtaining
solutions of the higher order Korteweg-deVries equation.
\end{abstract}
\section{Introduction}
Nonlinearity is an essential feature of many systems in Nature and society
\cite{medio} - \cite{viti11} 

Traveling wave solutions of  nonlinear partial differential equations are
studied much in the last decades \cite{murr}-\cite{kudr90} as they occur in many natural systems 
\cite{scott02}-\cite{ablowitz1}  and because of  existence of various  methods for obtaining such solutions 
\cite{gardner} - \cite{wazwaz1}. Below  we shall consider  the method of simplest equation for 
obtaining exact analytical solutions of nonlinear partial differential equations 
\cite{kudr05x}-\cite{k10} and especially  its version called modified method of simplest equation \cite{vdk10} - \cite{vit11b}. 
Method of simplest equation is based on  a procedure analogous to the first 
step of the test for the Painleve property \cite{k10}, \cite{k15}. In the version of the method 
called  modified method of the simplest equation \cite{vit09x},\cite{vd10}  this procedure is 
substituted by  the concept for the balance equation. Modified method of simplest equation has its roots back in the history (for an
example see \cite{koh}-\cite{vitx2}). Method of simplest equation has
been successfully applied for obtaining exact traveling wave solutions of  numerous
nonlinear PDEs such as versions of generalized Kuramoto - Sivashinsky equation, reaction - diffusion  equation, reaction - telegraph equation, 
generalized Swift - Hohenberg equation and generalized Rayleigh equation, 
generalized Fisher equation, generalized Huxley equation , generalized Degasperis - Procesi equation and b-equation, 
extended Korteweg-de Vries equation , etc.  \cite{kudr05} - \cite{vit14}.
\par
A short summary of the method of simplest equation is as follows. First of all by means 
of an appropriate ansatz (for an example the traveling-wave ansatz) the solved  of nonlinear partial differential equation is reduced to a nonlinear
ordinary differential equation
\begin{equation}\label{i1}
P \left( u, u_{\xi},u_{\xi \xi},\dots \right) = 0
\end{equation}
Then the finite-series solution 
\begin{equation}\label{i2}
u(\xi) = \sum_{\mu=-\nu}^{\nu_1} p_{\mu} [g (\xi)]^{\mu}
\end{equation}
is substituted in (\ref{i1}). $p_\mu$ are coefficients and $g(\xi)$ is solution of
simpler ordinary differential equation called simplest equation. Let the result
of this substitution be a polynomial of $g(\xi)$. Eq. (\ref{i2}) is a solution of Eq.(\ref{i1}) 
if all coefficients of the obtained polynomial of $g(\xi)$ are equal to $0$. This condition 
leads to a system of nonlinear algebraic equations. Each solution of the last  system  leads to a solution of the studied  
nonlinear partial differential equation.
\par 
In this article we consider a large class of (1+1)-dimensional nonlinear partial differential 
equations that are constricted by polynomials of the unknown function and its derivatives. As simplest equation we shall use equation of the kind
$$
\left(\frac{dg}{d\xi}\right)^2 = \sum \limits_{i=0}^m a_i g^i.
$$
The text below is organized as follows. In Sect. 2 we introduce the class
of studied nonlinear partial differential equations and  the used 
class of simplest equations and their solutions. Then we show  that 
any of the nonlinear partial differential equations of the discussed class can be
reduced to a system of two equations containing polynomials of the unknown function.
These polynomials can be obtained on the basis of addition and multiplication of 
some basic polynomials connected to the derivatives of the solved 
nonlinear partial differential equation. In section 3 we calculate some of the most used basic polynomials.
In Sect. 4 the methodology is illustrated by application for obtaining solitary wave solutions of 
\begin{itemize}
\item generalized Korteweg-deVries equation;
\item higher order Korteweg-deVries equation. 
\end{itemize}
Several concluding remarks are summarized in Sect. 5.
\section{Formulation of the method}
\subsection{Proof of the basic theorem}
Let us consider a nonlinear PDE with nonlinearities that are polynomials of
the unknown function $h(x,t)$ and its derivatives. We search 
solution of the kind
\begin{equation}\label{a1}
h(x,t) = h(\xi); \ \    \xi = \mu x + \nu t
\end{equation}
where $\mu$ and $\nu$ are parameters. The basis of our search will be a
solution $g(\xi)$ of a certain simplest equation. Hence 
\begin{equation}\label{a2}
h=f[g(\xi)]
\end{equation}
$h$ from Eq.({\ref{a2}) is a composite function. For the $n$-th derivative of
$h$ we have the Faa di Bruno formula \cite{fdb}
\begin{equation}\label{a3}
h_{(n)} = \sum \limits_{k=1}^n f_{(k)} \sum \limits_{p(k,n)} n! \prod_{i=1}^n
\frac{g_{(i)}^{\lambda_i}}{(\lambda_i!)(i!)^{\lambda_i}}
\end{equation}
where
\begin{itemize}
\item $h_{(n)} = \frac{d^n h}{dx^n}$;
\item $f_{(k)} = \frac{d^k f}{ dg^k}$;
\item $g_{(i)} = \frac{d^i g}{dx^i}$;
\item $p(n,k) = \{ \lambda_1, \lambda_2, \dots, \lambda_n \}$: set of
numbers such that 
\begin{equation}\label{a4}
\sum \limits_{i=1}^n \lambda_i = k; \sum_{i=1}^{n} i \lambda_i = n .
\end{equation}
\end{itemize}
Further we shall concentrate on $f_{(k)}$ and $g_{(i)}$.
\par 
Let us now assume that $f$ is a polynomial of $g$. Then
\begin{equation}\label{a5}
f = \sum \limits_{r=0}^q b_r g^r
\end{equation}
Let us consider the derivative $f_{(k)}$. 
If $k>r$ this derivative is $0$. The derivative
is non-zero if $k \le r$. We shall use the function $\Theta_{rk}$ with
the following definition
\begin{equation}\label{a6}
\Theta_{rk} = \begin{cases} 
      0 & r < k \\
      1 & r \ge k  
   \end{cases}
\end{equation}
Then the derivative $f_{(k)}$ can be written as
\begin{equation}\label{a7}
f_{(k)} = \sum \limits_{r=0}^q \Theta_{rk} \frac{r!}{(r-k)!} b_r g^{r-k}
\end{equation}
The derivative $g_{(i)}$ is connected to the simplest equation.
In general we can use the following simplest equation with polynomial nonlinearity
\begin{equation}\label{a8}
g_{(k)}^l = \left( \frac{d^k g}{d \xi^k} \right)^l = \sum_{j=0}^m a_j g^j
\end{equation}
The solution of this equation defines certain special function
$V_{a_0,a_1,\dots,a_m}(\xi; k,l,m$) where
\begin{itemize}
\item $k$: order of derivative of $g$;
\item $l$: degree of derivative in the defining ODE;
\item $m$: highest degree of the polynomial of $g$ in the defining ODE.
\end{itemize}
This special function has very interesting properties as its particular cases
are the trigonometric, hyperbolic, elliptic functions of Jacobi, etc.
Below we shall use the function $V_{a_0,a_1,\dots, a_m}(\xi;1,2,m)$ which is solution of the
simplest equation
\begin{equation}\label{a9}
g_{(1)}^2 = \left( \frac{dg}{d \xi} \right)^2 = \sum_{j=0}^m a_j g^j
\end{equation}
We let $m$ undetermined for now.
\par
If $g_{(1)}^2$ is given by Eq.(\ref{a9}) then what is the relationship for the
derivative $g_{(i)}$?
\begin{lemma}
If $g_{(1)}^2$ is given by Eq. (\ref{a9}) then the following relationship
holds for the derivative $g_{(i)}$:
$$
g_{(i)}^{\lambda_i} = [A_i(g)]^{\lambda_i} O_{i,\lambda_i}(g,g_{(1)}) \Omega_{\lambda_i}(g_{(1)})
$$
where 
\begin{itemize}
\item $\lambda_i$ is a non-negative integer;
\item $A_i(g)$: polynomial of $g$;
\item 
\begin{equation*} 
O_{i,\lambda_i}(g,g_{(1)}) = \begin{cases} 
      1 & i - \textrm{even}, \ \lambda_i - \textrm{even}  \\
      \frac{1}{g_{(1)}} & i - \textrm{even}, \ \lambda_i - \textrm{odd}  \\
      (\sum \limits_{j=0}^m a_j g^j)^{[\lambda_i/2]} & i - \textrm{odd}, \ \lambda_i - \textrm{even}\\ 
      (\sum \limits_{j=0}^m a_j g^j)^{[\lambda_i/2]} & i - \textrm{odd}, \ \lambda_i - \textrm{odd}\\ 
   \end{cases}
\end{equation*}
where $[\lambda_i/2]$ denotes the integer part of $\lambda_i/2$;
\item
\begin{equation*}
\Omega_{i}(g_{(1)}) = \begin{cases} 
      1 = g_{(1)}^0 & i - \textrm{even}  \\
      g_{(1)} & i - \textrm{odd}   
   \end{cases}
\end{equation*}
\end{itemize}
\end{lemma}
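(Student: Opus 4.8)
The plan is to decouple the problem into a structural description of $g_{(i)}$ first and the raising to the power $\lambda_i$ afterwards. First I would prove, by induction on $i$, the auxiliary claim that $g_{(i)} = A_i(g)$ when $i$ is even and $g_{(i)} = A_i(g)\, g_{(1)}$ when $i$ is odd, where in both cases $A_i(g)$ is a genuine polynomial in $g$ alone. The base cases are immediate: $g_{(1)} = 1\cdot g_{(1)}$, so $A_1(g)=1$; and differentiating (\ref{a9}) once with respect to $\xi$ gives $2 g_{(1)} g_{(2)} = g_{(1)}\sum_{j=1}^m j a_j g^{j-1}$, whence
$$
g_{(2)} = \frac{1}{2}\sum_{j=1}^m j a_j g^{j-1} =: A_2(g),
$$
a polynomial, as required for $i=2$ even.

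For the inductive step I would differentiate the two parity cases separately. If $i$ is even and $g_{(i)}=A_i(g)$, then $g_{(i+1)} = A_i'(g)\,g_{(1)}$, which is exactly the odd form with $A_{i+1}=A_i'$. If $i$ is odd and $g_{(i)} = A_i(g)\,g_{(1)}$, then
$$
g_{(i+1)} = A_i'(g)\,(g_{(1)})^2 + A_i(g)\,g_{(2)} = A_i'(g)\sum_{j=0}^m a_j g^j + A_i(g)\,A_2(g),
$$
where I have used (\ref{a9}) to eliminate $(g_{(1)})^2$ together with the base-case expression for $g_{(2)}$; the right-hand side is again a polynomial in $g$, so $g_{(i+1)}=A_{i+1}(g)$ has the even form. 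This closes the induction and fixes the parity alternation of the $g_{(1)}$-factor.

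With this in hand the passage to $g_{(i)}^{\lambda_i}$ is routine bookkeeping in the four parity combinations. For even $i$ we have $g_{(i)}^{\lambda_i}=[A_i(g)]^{\lambda_i}$, and one checks the claimed factor $O_{i,\lambda_i}\,\Omega_{\lambda_i}$ equals $1$ in both sub-cases: $\lambda_i$ even gives $1\cdot 1$, while $\lambda_i$ odd gives $\frac{1}{g_{(1)}}\cdot g_{(1)}$. For odd $i$ we have $g_{(i)}^{\lambda_i}=[A_i(g)]^{\lambda_i}(g_{(1)})^{\lambda_i}$, and I would reduce the residual power of $g_{(1)}$ by (\ref{a9}): writing $\lambda_i = 2[\lambda_i/2]+\epsilon$ with $\epsilon\in\{0,1\}$,
$$
(g_{(1)})^{\lambda_i} = \left(\sum_{j=0}^m a_j g^j\right)^{[\lambda_i/2]} (g_{(1)})^{\epsilon},
$$
which matches $O_{i,\lambda_i}\,\Omega_{\lambda_i}$ precisely in the even-$\lambda_i$ ($\epsilon=0$) and odd-$\lambda_i$ ($\epsilon=1$) sub-cases.

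The genuinely load-bearing step is the inductive claim, and within it the odd-to-even transition, since that is the only place where the substitution $(g_{(1)})^2=\sum_j a_j g^j$ must be invoked to collapse the differentiated expression back into a pure polynomial in $g$; everything after that is a parity count. I would also flag one cosmetic point to reconcile in the statement: the trailing factor is written $\Omega_{\lambda_i}$ but the accompanying display defines $\Omega_i$. The proof must read the subscript of $\Omega$ as the parity of $\lambda_i$ (consistent with the four-case definition of $O_{i,\lambda_i}$), and indeed the $i$-parity reading would already fail in the even-$i$, odd-$\lambda_i$ case; so the $\lambda_i$-reading is the one under which the identity holds.
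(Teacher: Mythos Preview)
Your proposal is correct and follows essentially the same route as the paper: first establish the parity-alternating structure $g_{(2n)}=A_{2n}(g)$, $g_{(2n+1)}=A_{2n+1}(g)\,g_{(1)}$ by induction, then raise to the power $\lambda_i$ and reduce $(g_{(1)})^{\lambda_i}$ via (\ref{a9}). The paper merely asserts the inductive claim (``by direct differentiation and by induction'') while you spell out the base cases and the even-to-odd and odd-to-even transitions, and your observation about reading the subscript of $\Omega$ as $\lambda_i$ is exactly how the paper uses it in (\ref{a16}).
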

\begin{proof}
It is easy to show (by direct differentiation and by
induction) that if $g_{(1)}^2$ is given by Eq. (\ref{a9}) then
\begin{equation}\label{a10}
g_{(2n)}=A_{2n}(g); \ \ \ \ g_{(2n+1)} = A_{2n+1}(g) g_{(1)}
\end{equation} 
where $A_{2n}(g)$ and $A_{2n+1}(g)$ are polynomials of $g$. 
\par
Let then
\begin{equation}\label{a11}
\Omega_{i}(g_{(1)}) = \begin{cases} 
      1 = g_{(1)}^0 & i - \textrm{even}  \\
      g_{(1)} & i - \textrm{odd}   
   \end{cases}
\end{equation}
Hence Eq.(\ref{a10}) can be written as
\begin{equation}\label{a12}
g_{(i)} = A_{i}(g) \Omega_{i}(g_{(1)})
\end{equation}
Thus
\begin{equation}\label{a13}
g_{(i)}^{\lambda_i} = [A_{i}(g)]^{\lambda_i} [\Omega_{i}(g_{(1)})]^{\lambda_i}
\end{equation}
Now from the definition of $\Omega_{i}$ - Eq.(\ref{a11}) we see that
$\Omega_{i}^{\lambda_{i}}$ is equal to $1$ if $i$ is even.
$\Omega_{i}^{\lambda_{i}}$
is equal to $g_{(1)}^{\lambda_i}$ if $i$ is odd. But from Eq.(\ref{a10}) for the case
of odd $\lambda_i$
\begin{equation}\label{a14}
g_{(1)}^{\lambda_i} =g_{(1)}^{2([\lambda_i/2]) + 1} = g_{(1)} (g_{(1)}^2)^{[\lambda_i/2]} 
\end{equation} 
In addition from Eq.(\ref{a10}) it follows that  $g_{(1)}^2$ is a polynomial of $g$ (this also can be seen
from the definition of the simplest equation (\ref{a9})). Thus we can define
\begin{equation}\label{a15} 
O_{i,\lambda_i}(g,g_{(1)}) = \begin{cases} 
      1 & i - \textrm{even}, \ \lambda_i - \textrm{even}  \\
      \frac{1}{g_{(1)}} & i - \textrm{even}, \ \lambda_i - \textrm{odd}  \\
      (\sum \limits_{j=0}^m a_j g^j)^{[\lambda_i/2]} & i - \textrm{odd}, \ \lambda_i - \textrm{even}\\ 
      (\sum \limits_{j=0}^m a_j g^j)^{[\lambda_i/2]} & i - \textrm{odd}, \ \lambda_i - \textrm{odd}\\ 
   \end{cases}
\end{equation}
and then
\begin{equation}\label{a16}
[\Omega_{i}(g_{(1)})]^{\lambda_{i}} = O_{i,\lambda_i}(g,g_{(1)}) \Omega_{\lambda_i}(g_{(1)})
\end{equation}
Substitution of Eq.(\ref{a16}) in Eq.(\ref{a13}) leads us to the relationship we want to prove
\begin{equation}\label{a17}
g_{(i)}^{\lambda_i} = [A_i(g)]^{\lambda_i} O_{i,\lambda_i}(g,g_{(1)})
\Omega_{\lambda_i}(g_{(1)})
\end{equation}
\end{proof}
\par 
Now we are in position to prove
\begin{theorem}
If $g_{(1)}^2$ is given by Eq.(\ref{a9}) and $f$ is a polynomial of $g$ given by Eq.(\ref{a5})
then for $h[f(g)]$ the following relationship holds
$$
h_{(n)} = K_n(q,m)(g) + g_{(1)} Z_n(q,m) (g)
$$
where $K_n(q,m)(g)$ and $Z_n(q,m)(g)$ are polynomials of the function $g(\xi)$.
\end{theorem}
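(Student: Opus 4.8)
The plan is to substitute the Faa di Bruno expansion (\ref{a3}) term by term into the two structural facts already established: that $f_{(k)}$ is a polynomial in $g$ by Eq.(\ref{a7}), and that every derivative factors as $g_{(i)} = A_i(g)\,\Omega_i(g_{(1)})$ by Eq.(\ref{a12}). Fixing one partition $p(k,n) = \{\lambda_1,\dots,\lambda_n\}$, the associated contribution to (\ref{a3}) is, up to the rational constant $n!/\prod_i (\lambda_i!)(i!)^{\lambda_i}$, the product $f_{(k)} \prod_{i=1}^n g_{(i)}^{\lambda_i}$. Inserting Eq.(\ref{a12}) splits this into a polynomial-in-$g$ part $f_{(k)} \prod_i [A_i(g)]^{\lambda_i}$ times the purely $g_{(1)}$-dependent factor $\prod_{i=1}^n [\Omega_i(g_{(1)})]^{\lambda_i}$.

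The key observation is that the second factor collapses to a single power of $g_{(1)}$. Since $\Omega_i$ equals $1$ for even $i$ and $g_{(1)}$ for odd $i$, we get $\prod_{i=1}^n [\Omega_i(g_{(1)})]^{\lambda_i} = g_{(1)}^{S}$ with $S := \sum_{i\ \mathrm{odd}} \lambda_i$. Now I invoke the simplest equation (\ref{a9}): because $g_{(1)}^2 = \sum_{j=0}^m a_j g^j$ is itself a polynomial in $g$, even powers of $g_{(1)}$ are polynomials in $g$ and odd powers are $g_{(1)}$ times such a polynomial. Explicitly, if $S$ is even then $g_{(1)}^S = (\sum_j a_j g^j)^{S/2}$ is a polynomial in $g$, whereas if $S$ is odd then $g_{(1)}^S = g_{(1)} (\sum_j a_j g^j)^{(S-1)/2}$. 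Hence each Faa di Bruno monomial is either a polynomial in $g$, or $g_{(1)}$ times a polynomial in $g$.

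To finish, I would sum over all partitions $p(k,n)$ and over $k = 1,\dots,n$, collecting every monomial with even $S$ into $K_n(q,m)(g)$ and every monomial with odd $S$ into the coefficient $Z_n(q,m)(g)$ of $g_{(1)}$. A finite sum of polynomials in $g$ is again a polynomial in $g$, so both $K_n$ and $Z_n$ are polynomials, giving exactly the claimed form $h_{(n)} = K_n(q,m)(g) + g_{(1)} Z_n(q,m)(g)$. A useful refinement, which also checks consistency with Eq.(\ref{a10}), comes from the partition constraint $\sum_{i=1}^n i\lambda_i = n$ in (\ref{a4}): reducing modulo $2$ and using $i\lambda_i \equiv \lambda_i$ for odd $i$ yields $S \equiv n \pmod 2$. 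Thus for even $n$ every monomial has even $S$ and $Z_n \equiv 0$, while for odd $n$ every monomial has odd $S$ and $K_n \equiv 0$, so in fact only one of the two polynomials survives according to the parity of $n$.

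The main obstacle is bookkeeping rather than an idea. One must resist working with the per-factor formula of the Lemma, whose $O_{i,\lambda_i}$ can carry spurious $1/g_{(1)}$ factors for even $i$ with odd $\lambda_i$, and instead aggregate all the $g_{(1)}$ dependence of a given monomial into the single exponent $S$ before reducing via (\ref{a9}). Working at the level of $g_{(1)}^S$ makes the $1/g_{(1)}$ factors cancel automatically and reduces the whole argument to the elementary parity analysis above.
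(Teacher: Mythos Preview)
Your argument is correct and follows essentially the same route as the paper: insert the Faa di Bruno expansion, recognize each monomial as a polynomial in $g$ times a power of $g_{(1)}$, and reduce that power via Eq.~(\ref{a9}) according to its parity. The only differences are that you work directly with the factorization $g_{(i)}=A_i(g)\Omega_i(g_{(1)})$ of Eq.~(\ref{a12}) rather than the Lemma's per-factor packaging through $O_{i,\lambda_i}$ (thereby sidestepping the $1/g_{(1)}$ bookkeeping), and that you add the parity refinement $S\equiv n\pmod 2$, which the paper does not prove here but observes empirically in Section~3.
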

\begin{proof}
The substitution of Eqs. (\ref{a17}) and (\ref{a7}) in Eq.(\ref{a3}) leads to
the following relationship for $h_{(n)}$
\begin{equation}\label{a18}
h_{(n)} = \sum \limits_{k=1}^n \sum \limits_{r=0}^q \Theta_{rk} \frac{r!}{(r-k)!}
b_r g^{r-k} \sum \limits_{p(n,k)} n! \prod \limits_{i=1}^n
\frac{[A_i(g)]^{\lambda_i} O_{i,\lambda_i}(g,g_{(1)}) \Omega_{\lambda_i}(g_{(1)})}{(
\lambda_i!)(i!)^{\lambda_i}}
\end{equation}
Now let us show that the relationship (\ref{a18}) can be written as
\begin{equation}\label{a19}
h_{(n)} = K_n(q,m)(g) + g_{(1)} Z_n(q,m) (g)
\end{equation}
where $K_n(q,m)(g)$ and $Z_n(q,m)(g)$ are polynomials of the function $g(\xi)$.
In oder to show this we rewrite Eq.(\ref{a18}) as 
\begin{equation}\label{a20}
h_{(n)} = \sum \limits_{k=1}^n \sum \limits_{r=0}^q \Theta_{rk} \frac{r!}{(r-k)!}
b_r g^{r-k} \sum \limits_{p(n,k)} n! \left[ \prod \limits_{i=1}^n
\frac{[A_i(g)]^{\lambda_i}}{(
\lambda_i!)(i!)^{\lambda_i}} \right] \prod \limits_{i=1}^n O_{i,\lambda_i}(g,g_{(1)}) \Omega_{\lambda_i}(g_{(1)})
\end{equation}
and consider $P=\prod \limits_{i=1}^n O_{i,\lambda_i}(g,g_{(1)}) \Omega_{\lambda_i}(g_{(1)})$. This
product is equal of polynomial of $g$ multiplied by $g_{(1)}^\sigma$ where $\sigma$ is an 
integer. There are two possibilities for $\sigma$:
\begin{itemize}
\item $\sigma$: \emph{even} . Then 
$(g_{(1)}^2)^{\sigma/2}$ is a polynomial of $g$ according to
Eq.(\ref{a9})
\item $\sigma$: \emph{odd}. Then $g_{(1)}^\sigma$ is equal to 
$g_{(1)}(g_{(1)}^2)^{[\sigma]/2}$ and according to Eq.(\ref{a9}) this is equal to
 $g_{(1)}$ multiplied by a polynomial of $g$.
\end{itemize} 
Because of all above there will be two kinds of
terms in $h_{(n)}$: terms that are polynomials of $g$ and terms that
contain $g_{(1)}$ multiplied by a polynomial of $g$.  Collecting the
two kinds of terms we arrive at Eq.(\ref{a19}). 
\end{proof}
Let us  note that for some values of $n$ one of the  polynomials $K_n(q,m)$ or $Z_n(q,m)$ can 
be equal to $0$. 
\subsection{Formulation of the method}
On the basis of all above the modified method of simplest equation based on
simplest equation (\ref{a9}) can be formulated as follows:
\begin{enumerate}
\item We consider a nonlinear PDE $E^*(u(x,t), u_x(x,t), u_t(x,t), u_{xt}(x,t),
\dots)=0$ where $E^*$ is a polynomial of $u(x,t)$ and its derivatives. We search for
solutions of this equation based on the ansatz $\xi = \mu x + \nu t$  where $\alpha$ and $\beta$
are parameters.
\item The ansatz $\xi = \mu x + \nu t$ reduces the nonlinear PDE to the ODE $E(h,h_{(1)},\dots) = 0$
where $E$ is a polynomial of $h$ and its derivatives.
\item
We assume $h$ and $g$ are given by the Eqs. (\ref{a5}) and (\ref{a9}).
Substitution of the relationships in the equation $E=0$ reduces any 
derivative of this equation to a term of the kind (\ref{a19}).
\item
As the terms in $E=0$ are polynomials of $h$ and its derivatives then
the equation reduces to a polynomial containing $g$ , $g_{(1)}$, $g_{(1)}^2$,
$\dots$: $E = W^*_0(g)+W^*_1 (g) g_{(1)} + W^*_2(g) g_{(1)}^2+ \dots$.
Eq.(\ref{a9}) reduces the higher degrees ($n>1$) of $g_{(1)}$ to a polynomial of
$g$ (for even $n$) or to a polynomial of $g$ multiplied by $g_{(1)}$ (for odd
$n$). Thus equation $E=0$ is reduces to
\begin{equation}\label{a21}
E = W_0(g) + W_1(g) g_{(1)} =0
\end{equation} 
where $W_{0,1}(g)$ are polynomials of $g$.
\item
In order to obtain a nontrivial solution of Eq.(\ref{a21}) we have to balance
the highest degree of the polynomial $W_0$ (i.e., to ensure that there are at least
two terms that contain the highest degree of $W_0$). The same has to be made for
the polynomial $W_1$. As a result we obtain one or two relationships among
the parameters of the equation and parameters of the solution. These equations
are called balance equations. Balance equations fix parameters $q$ and $m$.
\item
Further we  set to $0$ all coefficients
of the polynomials $W_0(g)$ and $W_1(g)$. The result is a system of nonlinear
algebraic equations that contains the parameters of the equation, parameters of
the solution (\ref{a5}) and parameters of the simplest equation (\ref{a9}).
\item
Any nontrivial solution of the above system of algebraic equations (if it
exists) leads to a traveling wave solution of the nonlinear PDE $E^*=0$.
\end{enumerate}
\section{Calculation of some of polynomials $K_n$ and $Z_n$ from Eq.(\ref{a19})}
The derivatives $h_{(1)}$, $h_{(2)}$, $h_{(3)}$, $h_{(4)}$, $h_{(5)}$, $h_{(6)}$
and $h_{(7)}$ are much used in the model nonlinear partial differential equations.
Below we shall calculate the polynomials $K_n$ and $Z_n$ connected to these derivatives.
\par
Let us mention first that the polynomials $K_{n+1}$ and $Z_{n+1}$ are connected to the
polynomials $K_n$ and $Z_n$. This relationship can be obtained on the
basis of the relationship
\begin{equation}\label{b1}
h_{(n+1)} = \frac{d}{d \xi} h_{(n)} 
\end{equation}
Substitution of Eq.(\ref{a19}) in Eq.(\ref{b1}) leads to
\begin{equation}\label{b2}
h_{(n+1)}=\left[Z_n g_{(2)} + \frac{dZ_n}{dg}g_{(1)}^2 \right] + \frac{dK_n}{dg} g_{(1)} 
\end{equation}
Taking in account Eq.(\ref{a9}) we obtain
\begin{eqnarray}\label{b3}
 K_{n+1} &=& \frac{Z_n}{2} \sum \limits_{j=0}^m j a_j g^{j-1} + \frac{dZ_n}{dg}
 \sum \limits_{j=0}^m a_j g^j \nonumber \\
  Z_{n+1} &=& \frac{dK_n}{dg}
\end{eqnarray}
Hence we need to calculate only $K_1$ and $Z_1$ and  then we can obtain $K_n$ and
$Z_n$, $n=2,3,\dots$ by the recurrence equations (\ref{b3}).
\par
We can write
\begin{eqnarray}\label{b3_0}
K_0 &=& \sum \limits_{r=0}^q b_r g^r \nonumber \\
Z_0 &=& 0
\end{eqnarray}
From Eq.(\ref{a18}) we obtain
\begin{equation}\label{b4}
K_1 = 0; \ \ Z_1 = \sum \limits_{r=0}^q r b_r g^{r-1}
\end{equation}
Then
\begin{eqnarray}\label{b5}
K_2 &=& \sum \limits_{r=0}^q \sum \limits_{j=0}^m \left[ \frac{1}{2} jr + r(r-1) 
\right]a_jb_r g^{j+r-2}\nonumber \\
Z_2 &=& 0.
\end{eqnarray}
\begin{eqnarray}\label{b6}
K_3 &=& 0; \nonumber \\
Z_3 &=&  \sum \limits_{r=0}^q \sum \limits_{j=0}^m \left[ \frac{1}{2} jr + r(r-1)
 \right] (j+r-2)a_jb_r g^{j+r-3}\nonumber \\ 
\end{eqnarray}
\begin{eqnarray}\label{b7}
K_4&=& \sum \limits_{r=0}^q \sum \limits_{j=0}^m \sum \limits_{u=0}^m 
\left[\left(\frac{1}{2} jr+r(r-1) \right) (j+r-2) \left( \frac{1}{2} u + j+r
-3 \right) \right] a_jb_r a_u g^{j+r+u-4}
\nonumber \\
Z_4&=&0.
\end{eqnarray}
\begin{eqnarray}\label{b8}
K_5 &=& 0; \nonumber \\
Z_5 &=& \sum \limits_{r=0}^q \sum \limits_{j=0}^m \sum \limits_{u=0}^m 
\left[\left(\frac{1}{2} jr+r(r-1) \right) (j+r-2) \left( \frac{1}{2} u + j+r
-3 \right) \right] (j+ \nonumber \\
&& r+u-4) a_jb_r a_u g^{j+r+u-5}
\end{eqnarray}
\begin{eqnarray}\label{b9}
K_6 &=& \sum \limits_{r=0}^q \sum \limits_{j=0}^m \sum \limits_{u=0}^m 
\sum \limits_{v=0}^m
\bigg[ \left(\frac{1}{2} jr+r(r-1) \right) (j+r-2) \left( \frac{1}{2} u + j+r
-3 \right) (j+ \nonumber \\
&& r+u-4) \bigg] \bigg(\frac{1}{2}v+ j+r+u-5\bigg) a_j b_r a_u a_v g^{j+r+u+v-6}\nonumber \\
Z_6 &=&0.
\end{eqnarray}
\begin{eqnarray}\label{b10}
K_7 &=& 0 ; \nonumber \\
Z_7 &=& \sum \limits_{r=0}^q \sum \limits_{j=0}^m \sum \limits_{u=0}^m 
\sum \limits_{v=0}^m
\bigg[ \left(\frac{1}{2} jr+r(r-1) \right) (j+r-2) \left( \frac{1}{2} u + j+r
-3 \right) (j+ \nonumber \\
&& r+u-4) \bigg] \bigg(\frac{1}{2}v+ j+r+u-5\bigg)(j+r+u+v-6) a_j b_r a_u a_v 
g^{j+r+u+v-7} \nonumber \\
\end{eqnarray}
etc.
\par
For the practical application of the modified method of simplest equation
we need to calculate the maximum grade of polynomials in $h_{(n)}$.
As we have seen  above in the text $h_{(n)}$ consists of two kinds of terms: polynomial
of $g$ plus another polynomial of $g$ multiplied by $g_{(1)}$.  
We note that the above maximum grades have to be non-negative. Thus the
obtained relationships below hold when the corresponding maximum degree is 
$max \ge 0$.
\par
By the method of mathematical induction we can prove that:
\begin{itemize}
\item \emph{For $h_{(2\sigma)}$:} Maximum grade of polynomial  $K_{2\sigma}(g)$ is
$max=q+\sigma(m-2)$ where $\sigma=1,2,\dots$. Maximum grade of the polynomial
$Z_{2\sigma}(g)$  is 0.
\item \emph{For $h_{(2\sigma +1)}$:} Maximum grade of polynomial of
$K_{2\sigma+1}(g)$ is
$0$. Maximum grade of the polynomial
of $Z_{2\sigma+1}(g)$ is $max=q+\sigma(m-2)-1$ where $\sigma =
0,1,\dots$.
\end{itemize}
\section{Examples}
\subsection{Generalized Korteweg-deVries equation}
We shall consider the equation
\begin{equation}\label{c1}
\frac{\partial u}{\partial t} + Au^p \frac{\partial u}{\partial x} +
 \frac{\partial^3 u}{\partial x^3} =0
\end{equation}
where $A$ is a parameter and $p$ is a positive integer number. This equation
has different applications as for an example in the electrohydrodynamics
\cite{perelman}.
For the case when $p$ is a positive integer Eq.(\ref{b1})
is called generalized Korteweg-deVries equation. It is obtained by 
addition of the dispersion term 
$\dfrac{\partial^3 u}{\partial x^3}$ to the nonlinear convective wave equation
$\dfrac{\partial u}{\partial t} + A u^p \dfrac{\partial u}{\partial x} =0$.
\par
We search for  solutions of Eq.(\ref{c1})  of the kind
$u=h[f(g(\xi))]$ where $\xi = \alpha x + \beta t$, $g$ is solution of the
simplest equation  (\ref{a9}) and $f$ is given by Eq.(\ref{a5}). 
The substitution of $u=h[f(g(\xi))]$ in Eq.(\ref{c1}) leads to  equation
of the kind (\ref{a21}) where
\begin{eqnarray}\label{c2}
W_0(g) &=& \nu K_1(g) + \mu A K_0(g)^p K_1(g) +  \mu^3 K_3(g) \nonumber \\
W_1(g) &=& \nu Z_1(g) + \mu A K_0(g)^p Z_1(g) +  \mu^3 Z_3(g)
\end{eqnarray}
As $K_1 = K_3 = 0$ there is no need to balance the relationship for $W_0(g)$.
The relationship for $W_1(g)$ has to be balanced however. The resulting balance
equation is
\begin{equation}\label{c3}
m = 2 + pq
\end{equation}
Let us consider the case $q=1$, $m=2+p$. Then from Eqs.(\ref{a5}) and (\ref{a9})
one obtains
\begin{equation}\label{c4}
h=b_0 + b_1 g; \ \ \ g_{(1)}^2 = \sum \limits_{j=0}^{2+p} a_j g^j.
\end{equation} 
Substituting corresponding relationships for the polynomials $Z_1$, $Z_3$
and $K_0$ in the second of equations (\ref{c2}) we obtain the following
system of nonlinear algebraic relationships among the parameters of Eq.(\ref{c1})
and the parameters of the solution:
\begin{equation}\label{c4}
\nu b_1 \delta_{0,k} + {p \choose k} \mu  A b_0^{p-k} b_1^{k+1} + \frac{1}{2}
\mu^3  (k+1)(k+2) a_{k+2}b_1 =0, \ \ k=0,\dots,p.
\end{equation}
where $\delta$ is the delta-symbol of Kronecker.
\par
Solution of Eqs.(\ref{c4}) can be obtained when $b_0=0$. Then
\begin{equation}\label{c5}
\nu b_1 \delta_{0,k} + { p \choose k} \mu A b_1^{k+1} \delta_{k,p} + \frac{1}{2}
\mu^3  (k+1)(k+2) a_{k+2}b_1 =0, \ \ k=0,\dots,p.
\end{equation}
The system (\ref{c5}) is reduced to
\begin{eqnarray}\label{c6}
k&=&0: \ \ \ \nu + \mu^3  a_2 =0 \nonumber \\
k&=&1:  \ \ \ a_3 = 0 \nonumber \\
&&\dots  \dots \nonumber \\
k&=&p-1: \ \ \ a_{p+1}=0 \nonumber \\
k&=&p:  \ \ \ A b_1^p + \frac{1}{2} \mu^2  (p+1)(p+2) a_{p+2} = 0 .
\end{eqnarray}
The system (\ref{c6}) has a solution: 
\begin{equation}\label{c7}
a_2 = - \frac{\nu}{\mu^3 }; \ \ \ a_{p+2} = - \frac{2Ab_1^p}{\mu^2
(p+1)(p+2)}
\end{equation}
Hence the simplest equation Eq.(\ref{a9}) becomes
\begin{equation}\label{c8}
g_{(1)}^2 = -\frac{\nu}{\mu^3 A} g^2 - \frac{2b_1^p}{\mu^2
A(p+1)(p+2)} g^{p+2}
\end{equation}
A solution of this simplest equation is as follows. The function
\begin{equation}\label{c9}
g(\xi) = \frac{\Omega}{\cosh^\omega(\xi)}
\end{equation}
(where $\omega$ and $\Omega$ are parameters) is solution of the equation
\begin{equation}\label{c10}
g_{(1)}^2 = \omega^2 g^2 - \frac{\omega^2}{\Omega^{2/\omega}} g ^{2+ 2/\omega}
\end{equation}
Hence if
\begin{equation}\label{c11}
\omega = \frac{2}{p}; \ \ \mu^2 =   \frac{p^2 A \Omega^p b_1^p}{2 
(p+1)(p+2)} ;
 \ \ \nu =  - \frac{4 \mu^3}{p^2}
\end{equation}
then
\begin{equation}\label{c12}
u(\xi) = \frac{\Omega b_1}{\cosh^{2/p}(\xi)}; \xi = \mu x + \nu t
\end{equation}
is solution of  Eq.(\ref{c1}).
\par
Let us note that according to Eqs.(\ref{c9}) and (\ref{c10})
\begin{equation}\label{c12x}
V_{0,0,4/p^2,0,\dots,a_{p+2}=-\frac{4}{p^2 \Omega^p}}(\xi;1,2,2+p) =
\frac{\Omega}{\cosh^{2/p}(\xi)}
\end{equation}
\par
Let us consider the particular case $p=1$, $A=-6$. Then Eq.(\ref{c1}) is
reduced to the classical Korteweg-deVries equation. Setting $b_1 = 1$ and $\Omega = -2$
we obtain $\mu=1$, $\nu = -4$. Thus the solution (\ref{c12}) reduces to
the one-soliton solution of the equation of Korteweg-deVries
\begin{equation}\label{c13}
u(x,t) = -\frac{2}{\cosh^2(x-4t)}
\end{equation}
Let us stress the following.
Above $p$ was arbitrary non-negative integer. Now we shall show that $p$ can be
arbitrary non-zero real number. Namely we shall prove
\begin{prop}
Eq. (\ref{c12}) is solution of Eq.(\ref{c1}) for arbitrary real nonzero value of
$p$.
\end{prop}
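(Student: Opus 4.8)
The plan is to verify the proposition by \emph{direct substitution} rather than by re-running the reduction that produced the algebraic system (\ref{c5})--(\ref{c6}). That earlier reduction is available only for positive integer $p$: it relied on expanding $u^p=(b_0+b_1g)^p$ through the binomial theorem and on treating $K_0^p$ as a genuine polynomial in $g$, neither of which is legitimate for non-integer $p$. However, the candidate (\ref{c12}) is a single explicit function, so once the parameter relations (\ref{c11}) are fixed we can simply insert it into (\ref{c1}) and check that it vanishes identically. The crucial observation that makes this work for real $p$ is that the identity (\ref{c9})--(\ref{c10}), namely that $g=\Omega\cosh^{-\omega}(\xi)$ solves $g_{(1)}^2=\omega^2g^2-\tfrac{\omega^2}{\Omega^{2/\omega}}g^{2+2/\omega}$, is a pure differentiation identity for $\cosh^{-\omega}$ and holds for \emph{every} real $\omega$; nothing in its derivation requires $2/\omega$ to be an integer.

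First I would record the first-order ODE satisfied by $u$. Writing $b_0=0$, $u=b_1g$, $\omega=2/p$ and $B=\Omega b_1$, and using the identity above with $2/\omega=p$, one obtains
\begin{equation*}
u_\xi^2=\frac{4}{p^2}\,u^2-\frac{4}{p^2B^p}\,u^{p+2},
\end{equation*}
which is just the simplest equation (\ref{c8}) rewritten for $u$ and is valid for arbitrary real nonzero $p$. Differentiating once and cancelling the common factor $u_\xi$ gives $u_{\xi\xi}=\tfrac{4}{p^2}u-\tfrac{2(p+2)}{p^2B^p}u^{p+1}$, and differentiating a second time yields
\begin{equation*}
u_{\xi\xi\xi}=\frac{4}{p^2}\,u_\xi-\frac{2(p+1)(p+2)}{p^2B^p}\,u^{p}\,u_\xi.
\end{equation*}
Substituting this into the traveling-wave form $\nu u_\xi+\mu Au^pu_\xi+\mu^3u_{\xi\xi\xi}=0$ of Eq.(\ref{c1}) and factoring out $u_\xi$ reduces the equation to the requirement that the bracket
\begin{equation*}
\Big(\nu+\frac{4\mu^3}{p^2}\Big)+\Big(\mu A-\frac{2\mu^3(p+1)(p+2)}{p^2B^p}\Big)u^{p}
\end{equation*}
vanish identically. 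The constant part vanishes precisely when $\nu=-4\mu^3/p^2$, and the coefficient of $u^p$ vanishes precisely when $\mu^2=\tfrac{p^2AB^p}{2(p+1)(p+2)}=\tfrac{p^2A\Omega^pb_1^p}{2(p+1)(p+2)}$; these are exactly the relations (\ref{c11}). Hence (\ref{c12}) solves (\ref{c1}) for every real nonzero $p$.

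The step I expect to be the main obstacle is making the power-law manipulations rigorous when $p$ is not an integer. Differentiating $u^{p+2}$ and $u^{p+1}$, cancelling $u_\xi$, and collecting a common factor $u^p$ all invoke the chain rule for real exponents, which is valid only where $u>0$; since $u=\Omega b_1\cosh^{-2/p}(\xi)$ keeps the constant sign of $\Omega b_1$, I would restrict to $\Omega b_1>0$ (or otherwise interpret $u^p$ as a real power on the branch where it is defined) so that $u^p$, both in the PDE and in all intermediate expressions, is real and smooth. With that caveat the verification is the routine differentiation above; the essential mathematical content is simply that the governing identity (\ref{c10}) and the resulting first-order ODE for $u$ never used integrality of $p$, so the explicit solution persists for all real nonzero $p$.
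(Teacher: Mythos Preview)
Your proposal is correct and follows the same approach as the paper---direct substitution of (\ref{c12}) into (\ref{c1}) with the parameters (\ref{c11}). The paper's own proof is in fact a one-line assertion that the substitution works, whereas you actually carry out the verification via the first-order ODE for $u$ and the resulting expression for $u_{\xi\xi\xi}$; your added caveat about interpreting $u^p$ for non-integer $p$ (requiring $\Omega b_1>0$, or a fixed branch) is a point the paper does not address.
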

\begin{proof}
Let us substitute Eq.(\ref{c12}) in Eq.(\ref{c1}) where parameters $\alpha$ and
$\beta$ are given by Eq.(\ref{c11}) and $p$ is an arbitrary nonzero real number.
Eq.(\ref{c1}) is satisfied. Hence Eq.(\ref{c12}) is solution of Eq.(\ref{c1})
for arbitrary real $p \ne 0$.
\end{proof}
Let for an example $p=3/2$. Then a solitary wave solution of the equation
\begin{equation}\label{c14}
\frac{\partial u}{\partial t} + Au^{3/2} \frac{\partial u}{\partial x} +
 \frac{\partial^3 u}{\partial x^3} =0
\end{equation}
is
\begin{equation}\label{c15}
u(x,t) = \Omega b_1 \cosh^{-4/3} \left[\left( \frac{9}{70} A \Omega^{3/2} b_1^{3/2} \right)^{1/2} x -
\frac{16}{9} \left( \frac{9}{70} A \Omega^{3/2} b_1^{3/2} \right)^{3/2} t \right]
\end{equation}
\subsection{Application of methodology to the higher order Korteweg-deVries
equation}
Let us apply the above methodology to a more complicated equation such as the
second-order Korteweg-deVries equation
\begin{equation}\label{d1}
\frac{\partial u}{\partial t} + \frac{\partial u}{\partial x} + \alpha_0 u 
\frac{\partial u}{\partial x} + \alpha_1 \frac{\partial u}{\partial x}
\frac{\partial^2 u}{\partial x^2}+ \alpha_2 u \frac{\partial^3 u}{\partial x^3}
+ \alpha_3 u^2 \frac{\partial u}{\partial x} + \alpha_4 \frac{\partial^3
u}{\partial x^3} + \alpha_5 \frac{\partial^5 u}{\partial x^5} =0
\end{equation}
\par 
This equation is known also as Olver equation \cite{o1} and it is a second
order equation for description of shallow water waves (the first order
equation is the famous Korteweg-deVries equation). 
Eq.(\ref{d1}) has been used also as a model equation for nonlinear waves
in a liquid with gas bubbles \cite{ks14a,ks14b}
From the point of view of the 
method of simplest equation (based on the first step of a test for Painleve property)
the equation was discussed in \cite{ks14a}-\cite{k12} and various solutions expressed by the elliptic
function of Weierstrass have been obtained there. Below we shall discuss Eq.(\ref{d1})
from the point of view of the modified method of simplest equation (based on the concept
for balance equation). We stress again that  the modified method of simplest equation is a version of the 
method of simplest equation and then some of the obtained below solutions (especially for the 
case $m=3$) will be the same as these obtained in \cite{k12}. After studying the methodologies
from \cite{k12}  and from this paper the reader will have an extensive understanding about
the possibilities for obtaining exact analytical solutions of nonlinear partial 
differential equations on the basis of the method of simplest equation.
\par
We search for a solution of the kind $u(x,t) = u(\xi) = u(\alpha x + \beta t)$ 
(We note that in \cite{k12} $\alpha=1$).
According to the theory above Eq.(\ref{d1}) can be reduced to the following
system of equations 
\begin{eqnarray}\label{d2}
W_0(g)&=&0; \nonumber \\
W_1(g)&=& (\mu + \nu) Z_1(g) + \alpha_0 \mu K_0(g) Z_1(g) + \alpha_1
\mu^3 K_2(g) Z_1(g) + \nonumber \\
&& \alpha_2 \mu^3 K_0(g) Z_3(g) + \alpha_3 \mu
K_0^2(g) Z_1(g) + \alpha_4 \mu^3 Z_3(g) + \nonumber \\
&& \alpha_5 \mu^5 Z_5(g) =0
\end{eqnarray}
The second of Eqs.({\ref{d2}) has to be balanced and the balance is as follows
\begin{enumerate}
\item $m=1$: there is no balance;
\item $m=2$: there is no balance;
\item $m \ge 3$: the equation is balanced if $q=m-2$.
\end{enumerate}
Let us now consider several cases.
\subsection{Case $m=3$}
In this case $q=1$. Then
\begin{equation}\label{d10}
u(\xi) = b_0 + b_1 g(\xi); \ \ \xi = \mu x + \nu t
\end{equation}
and 
\begin{equation}\label{d11}
g_{(1)}^2 = a_0 + a_1 g + a_2 g^2 + a_3 g^3
\end{equation}
In addition we have to solve the system of nonlinear algebraic equations for the
parameters of the solution that can be obtained from the equation $W_1(g) =0$
from Eqs.(\ref{d2}).
\par
We note that the general solution of Eq.(\ref{d11}) is given by the special
function $V_{a_0,a_1,a_2,a_3}(\xi; 1,2,3)$ and for the special case when
$a_2 =0$ and $a_3=4$ we have reduction of $V$ to the elliptic function of
Weierstrass
\begin{equation}\label{d12}
V_{a_0,a_1,0,4}(\xi;1,2,3) = \wp(\xi;a_0,a_1)
\end{equation}
\par
Let us first consider the case of general non-reduced solution $V_{a_0,a_1,a_2,a_3}(\xi; 1,2,3)$ of Eq.(\ref{d11}). 
The system of nonlinear algebraic equations arising from second of the Eqs.(\ref{d2}) is (\ref{d13x}) from the 
appendix A. The solution of the system (\ref{d13x}) is:
\begin{eqnarray}\label{d14}
a_3 &=&   \frac{b_1}{30 \mu^2 \alpha_5} \bigg(-\alpha_1 - 2 \alpha_2 + \sqrt{
(\alpha_1 + 2 \alpha_2)^2 - 40 \alpha_3 \alpha_5} \bigg)\nonumber \\
a_2 &=& - \frac{1}{5 \mu^2 \alpha_5 \bigg(\alpha_1 + \sqrt {(\alpha_1 + 2 \alpha_2)^2 - 40 \alpha_3 \alpha_5}\bigg)} \bigg[ 
\nonumber \\
&& (\alpha_2 b_0 + \alpha_4) \sqrt{(\alpha_1 + 2 \alpha_2)^2 - 40 \alpha_3 \alpha_5} -\alpha_1 \alpha_2 b_0 - 2 \alpha_2^2 b_0 +
\nonumber \\
&& 20 \alpha_3 \alpha_5 b_0 + 10 \alpha_0 \alpha_5 - \alpha_4 \alpha_1 - 2 \alpha_4  \alpha_2 \bigg] \nonumber \\
a_1 &=& \frac{T_1}{T_2} \nonumber \\
T_1 &=& - \frac{2}{5} \bigg[ 25 \mu \alpha_1^2 \alpha_3 \alpha_5 b_0^2 - 6 \mu \alpha_2^4 b_0^2 + 90 \mu \alpha_2^2 \alpha_3 \alpha_5 b_0^2 + 25 \mu \alpha_0 \alpha_1^2 \alpha_5 b_0 + \nonumber \\
&& 30 \mu \alpha_0 \alpha_2^2 \alpha_5 b_0 - 300 \mu \alpha_0 \alpha_3 \alpha_5^2 b_0 + 2 \mu \alpha_1^2 \alpha_2 \alpha_4 b_0 - 2 \mu \alpha_1 \alpha_2^2 \alpha_4 b_0 + \nonumber \\
&& 50 \mu \alpha_0^2 \alpha_5^2 - 35 \mu \alpha_0 \alpha_1 \alpha_4 \alpha_5 - 20 \mu \alpha_0 \alpha_2 \alpha_4 \alpha_5 + \mu \alpha_1^2 \alpha_4^2 - 6 \mu \alpha_2^2 \alpha_4^2 + \nonumber \\
&& 25 \mu \alpha_1^2 \alpha_5 + 50 \mu \alpha_2^2 \alpha_5 - 500 \mu \alpha_3 \alpha_5^2 + 25 \nu \alpha_1^2 \alpha_5 + 50 \nu \alpha_2^2 \alpha_5 - \nonumber \\
&& 500 \nu \alpha_3 \alpha_5^2 + 80 \mu \alpha_3 \alpha_4^2 \alpha_5 + 50 \mu \alpha_1  \alpha_2 \alpha_5 + 50 \nu \alpha_1 \alpha_2 \alpha_5 + \mu \alpha_1^2 \alpha_2^2 b_0^2 - \nonumber \\
&& \mu \alpha_1 \alpha_2^3 b_0^2 - 300 \mu \alpha_3^2 \alpha_5^2 b_0^2 - 12 \mu \alpha_2^3 \alpha_4 b_0 - \mu \alpha_1 \alpha_2 \alpha_4^2  - 20 \mu \alpha_1 \alpha_2 \alpha_3 \alpha_5 b_0^2 + \nonumber \\
&& 15 \mu \alpha_0 \alpha_1 \alpha_2 \alpha_5 b_0 - 70 \mu \alpha_1 \alpha_3 \alpha_4 \alpha_5 b_0 + 120 \mu \alpha_2 \alpha_3 \alpha_4 \alpha_5 b_0 + \nonumber \\
&& \bigg( 6 \mu \alpha_2^2 \alpha_4 b_0  - \mu \alpha_1 \alpha_2^2 b_0^2 - 15 \mu \alpha_0 \alpha_4 \alpha_5 + 3 \mu \alpha_2^3 b_0^2 - \mu a_1 a_4^2 + 3 \mu a_2 a_4^2 + \nonumber \\
&& 25 \mu \alpha_1 \alpha_5 + 25 \nu \alpha_1 \alpha_5 + 25 \mu \alpha_1 \alpha_3 \alpha_5 b_0^2 - 30 \mu \alpha_2 \alpha_3 \alpha_5 b_0^2 + 25 \mu \alpha_0 \alpha_1 \alpha_5 b_0 - \nonumber \\
&& 2 \mu \alpha_1 \alpha_2 \alpha_4 b_0 - 30 \mu \alpha_3 \alpha_4 \alpha_5 b_0 \bigg) \sqrt{(\alpha_1 +  2 \alpha_2)^2 - 40 \alpha_3 \alpha_5} \bigg] \nonumber \\
T_2 &=& \mu^2 \alpha_5 b_1 \bigg[(5 \alpha_1^2 + 3 \alpha-2^2 - 30 \alpha_3 \alpha_5) \sqrt{(\alpha_1 + 2 \alpha_2)^2 - 40 \alpha_3 \alpha_5} + \nonumber \\
&& 10 \alpha_1^2 \alpha_2 + 7 \alpha_1  \alpha_2^2 - 130 \alpha_1 \alpha_3 \alpha_5 - 6 \alpha_2^3 + 60 \alpha_2 \alpha_3 \alpha_5 \bigg] 
\end{eqnarray}
The solution of the higher order Korteweg-deVries equation (\ref{d1}})  is
\begin{eqnarray}\label{d15}
u(\xi) = b_0 + b_1 V_{a_0,a_1,a_2,a_3}(\xi;1,2,3); \ \ \xi = \mu x + \nu t
\end{eqnarray}
where $a_{1,2,3}$ are given by Eqs.(\ref{d14}).
\par
Let us now consider the particular case (\ref{d12}) when the V-function is
reduced to the elliptic function of Weierstrass. In this case we have to set
$a_2=0$; $a_3=4$ in the system (\ref{d13}). The solution of the obtained system of algebraic equations is
\begin{eqnarray}\label{d16}
b_0 &=& - \frac{\alpha_0 \sqrt{(\alpha_1 + \alpha_2)^2 - 40 \alpha_3 \alpha_5}- \alpha_0 \alpha_1 - 2 \alpha_0 \alpha_2 + 4 \alpha_4 \alpha_3}{2 \alpha_3  \bigg( \sqrt{(\alpha_1 + 2 \alpha_2)^2 - 40 \alpha_3 \alpha_5} - \alpha_1 \bigg)} \nonumber \\
b_1 &=&  \frac{3 \mu^2 \bigg( -\alpha_1 - 2 \alpha_2 + \sqrt{(\alpha_1 + 2 \alpha_2)^2 - 40 \alpha_3 \alpha_5}\bigg) }{\alpha_3} \nonumber \\
\end{eqnarray}
\begin{eqnarray*}
a_1 &=& \frac{T_3}{T_4} \nonumber \\
T_3 &=& - \frac{1}{3} \bigg[ -\mu \alpha_0^2 \alpha_1^2 - 2 \mu \alpha_0^2 \alpha_1 \alpha_2 + 20 \mu \alpha_0^2 \alpha_3 \alpha_5 - 8 \mu \alpha_0 \alpha_2 \alpha_3 \alpha_4 + 8 \mu \alpha_3^2 \alpha_4^2 +\nonumber \\
&&  + 4 \mu \alpha_1^2 \alpha_3 + 8 \mu \alpha_1 \alpha_2 \alpha_3 + 8 \mu \alpha_2^2 \alpha_3 - 80 \mu \alpha_3^2 \alpha_5 + 4 \nu \alpha_1^2 \alpha_3 + 8 \nu \alpha_1 \alpha_2 \alpha_3 + \nonumber \\
&& 8 \nu \alpha_2^2 \alpha_3 - 80 \nu \alpha_3^2 \alpha_5 +
\bigg(\mu \alpha_0^2 \alpha_1 - 4 \mu \alpha_1 \alpha_3 - 4 \nu \alpha_1 \alpha_3 \bigg) 
\sqrt{(\alpha_1 + 2 \alpha_2)^2 - 40 \alpha_3 \alpha_5} \bigg]\nonumber \\
T_4 &=& \mu^5 \bigg( \sqrt{(\alpha_1 + 2 \alpha_2)^2 - 40 \alpha_3 \alpha_5} - \alpha_1 \bigg)^2 \bigg( \alpha_1 \sqrt{(\alpha_1 + 2 \alpha_2)^2 - 40 \alpha_3 \alpha_5}-\nonumber \\
&& \alpha_1^2 - 2 \alpha_1 \alpha_2 + 12 \alpha_3 \alpha_5\bigg)
\end{eqnarray*}
The solution of Eq.(\ref{d1}) becomes
\begin{eqnarray}\label{d17}
u(\xi) &=& - \frac{\alpha_0 \sqrt{(\alpha_1 + 2 \alpha_2)^2 - 40 \alpha_3 \alpha_5}- \alpha_0 \alpha_1 - 2 \alpha_0 \alpha_2 + 4 \alpha_4 \alpha_3}{2 \alpha_3  \bigg( \sqrt{(\alpha_1 + 2 \alpha_2)^2 - 40 \alpha_3 \alpha_5} - \alpha_1 \bigg)} + \nonumber \\
&& \frac{3 \mu^2 \bigg( -\alpha_1 - 2 \alpha_2 + \sqrt{(\alpha_1 +  2 \alpha_2)^2 - 40 \alpha_3 \alpha_5}\bigg) }{\alpha_3}
 \wp (\xi; a_0, a_1) \nonumber \\
&& \xi = \mu x + \nu t.
\end{eqnarray}
where $a_1$ is given by the corresponding relationship from Eqs.(\ref{d16}).
\par 
Let us now consider as simplest equation Eq.(\ref{c10}) for the case $\omega=2$,
namely
\begin{equation}\label{dd1}
g_{(1)}^2 = 4 g^2 - \frac{4}{\Omega} g^3
\end{equation}
In this case $a_0 = a_1 =0$, $a_2 =4$ and $a_3 = -\frac{4}{\Omega}$. The solution of Eq.(\ref{dd1}) is
\begin{equation}\label{dd2}
g(\xi) = \frac{\Omega}{\cosh^2(\xi)} = V_{0,0,4,-\frac{4}{\Omega}}(\xi; 1,2,3)
\end{equation}
One solution of the system of nonlinear algebraic equations for this case is
\begin{eqnarray}\label{dd3}
b_0 &=& - \frac{1}{\alpha_3 \sqrt{(\alpha_1 + 2 \alpha_2)^2 - 40 \alpha_3 \alpha_5} + \alpha_1} \bigg[ \big( 4 \mu^2 \alpha_1
+ 4 \mu^2 \alpha_2 + \nonumber \\
&& \alpha_0 \big) \sqrt{(\alpha_1 + 2 \alpha_2)^2 - 40 \alpha_3 \alpha_5}  + 4 \mu^2 \alpha_1^2 + 12 \mu^2 \alpha_1 \alpha_2 + \nonumber \\
&& 8 \mu^2 \alpha_2^2 - 80 \alpha_5 \mu^2 \alpha_3 + \alpha_0 \alpha_1 + 2 \alpha_0 \alpha_2 - 4 \alpha_4 \alpha_3 \bigg]\nonumber \\
\end{eqnarray}
\begin{eqnarray*}
b_1 &=& \frac{3 \mu^2}{\alpha_3 \Omega} \bigg( \alpha_1 + 2 \alpha_2 + \sqrt{(\alpha_1 + 2 \alpha_2)^2 - 40 \alpha_3 \alpha_5} \bigg)  \nonumber \\
\nu &=& -\frac{\mu}{4 \alpha_3 \big( \sqrt{(\alpha_1 + 2 \alpha_2)^2 - 40 \alpha_3 \alpha_5} \alpha_1 + \alpha_1^2 + 2 \alpha_1 \alpha_2 + 2 \alpha_2^2 - 20 \alpha_3 \alpha_5 \big)} \bigg[ \nonumber \\
&& \big( \alpha_0^2 \alpha_1 + 4 \alpha_1 \alpha_3 + 16 \mu^4 \alpha_1^3 + 32 \mu^4 \alpha_1^2 \alpha_2 + 16 \mu^4 \alpha_1 \alpha_2^2 - \nonumber \\
&& 256 \mu^4 \alpha_1 \alpha_3 \alpha_5 \big) \sqrt{(\alpha_1  + 2 \alpha_2)^2 - 40 \alpha_3 \alpha_5} +64 \mu^4 \alpha_1^3 \alpha_2 + 80 \mu^4 \alpha_1^2 \alpha_2^2 - \nonumber \\
&& 576 \mu^4 \alpha_1^2 \alpha_3  \alpha_5 + 32 \mu^4 \alpha_1 \alpha_2^3 - 512 \mu^4 \alpha_1 \alpha_2 \alpha_3 \alpha_5 - 192 \mu^4 \alpha_2^2 \alpha_3 \alpha_5 + 1920 \mu^4 \alpha_3^2 \alpha_5^2 -\nonumber \\
&& \alpha_0^2 \alpha_1^2 - 2 \alpha_0^2 \alpha_1 \alpha_2 + 20 \alpha_0^2 \alpha_3 \alpha_5 - 8 \alpha_0 \alpha_2 \alpha_3 \alpha_4 + 8 \alpha_3^2 \alpha_4^2 + 4 \alpha_1^2 \alpha_3 + 8 \alpha_1 \alpha_2 \alpha_3 + \nonumber \\
&& 8 \alpha_2^2 \alpha_3 - 80 \alpha_3^2 \alpha_5\bigg]
\end{eqnarray*}
and the corresponding solitary-wave solution of Eq.(\ref{d1}) is
\begin{eqnarray}\label{dd4}
u(\xi) &=& - \frac{1}{\alpha_3 \sqrt{(\alpha_1 + 2 \alpha_2)^2 - 40 \alpha_3 \alpha_5} + \alpha_1} \bigg[ \big( 4 \mu^2 \alpha_1
+ 4 \mu^2 \alpha_2 + \nonumber \\
&& \alpha_0 \big) \sqrt{(\alpha_1  + 2 \alpha_2)^2 - 40 \alpha_3 \alpha_5}  + 4 \mu^2 \alpha_1^2 + 12 \mu^2 \alpha_1 \alpha_2 + \nonumber \\
&& 8 \mu^2 \alpha_2^2 - 80 \alpha_5 \mu^2 \alpha_3 + \alpha_0 \alpha_1 + 2 \alpha_0 \alpha_2 - 4 \alpha_4 \alpha_3 \bigg] + \frac{3 \mu^2}{\alpha_3} \bigg( \alpha_1 + 2 \alpha_2 + \nonumber \\
&& \sqrt{(\alpha_1 +  2 \alpha_2)^2 - 40 \alpha_3 \alpha_5} \bigg)  \frac{1}{\cosh^2(\xi)}
\end{eqnarray}
where
\begin{eqnarray*}
\xi &=& \mu x -\frac{\mu}{4 \alpha_3 \big( \sqrt{(\alpha_1 + 2 \alpha_2)^2 - 40 \alpha_3 \alpha_5} \alpha_1 + \alpha_1^2 + 2 \alpha_1 \alpha_2 + 2 \alpha_2^2 - 20 \alpha_3 \alpha_5 \big)} \bigg[ \nonumber \\
&& \big( \alpha_0^2 \alpha_1 + 4 \alpha_1 \alpha_3 + 16 \mu^4 \alpha_1^3 + 32 \mu^4 \alpha_1^2 \alpha_2 + 16 \mu^4 \alpha_1 \alpha_2^2 - \nonumber \\
&& 256 \mu^4 \alpha_1 \alpha_3 \alpha_5 \big) \sqrt{(\alpha_1  + 2 \alpha_2)^2 - 40 \alpha_3 \alpha_5} +64 \mu^4 \alpha_1^3 \alpha_2 + 80 \mu^4 \alpha_1^2 \alpha_2^2 - \nonumber \\
&& 576 \mu^4 \alpha_1^2 \alpha_3  \alpha_5 + 32 \mu^4 \alpha_1 \alpha_2^3 - 512 \mu^4 \alpha_1 \alpha_2 \alpha_3 \alpha_5 - 192 \mu^4 \alpha_2^2 \alpha_3 \alpha_5 + 1920 \mu^4 \alpha_3^2 \alpha_5^2 -\nonumber \\
&& \alpha_0^2 \alpha_1^2 - 2 \alpha_0^2 \alpha_1 \alpha_2 + 20 \alpha_0^2 \alpha_3 \alpha_5 - 8 \alpha_0 \alpha_2 \alpha_3 \alpha_4 + 8 \alpha_3^2 \alpha_4^2 + 4 \alpha_1^2 \alpha_3 + 8 \alpha_1 \alpha_2 \alpha_3 + \nonumber \\
&& 8 \alpha_2^2 \alpha_3 - 80 \alpha_3^2 \alpha_5\bigg]t
\end{eqnarray*}
\subsection{Case $m=4$}
In this case $q=2$. Then
\begin{equation}\label{d18}
u(\xi) = b_0 + b_1 g(\xi) + b_2 g^2(\xi); \ \ \xi = \mu x + \nu t
\end{equation}
and 
\begin{equation}\label{d19}
g_{(1)}^2 = a_0 + a_1 g + a_2 g^2 + a_3 g^3 + a_4 g^4
\end{equation}
In addition we have to solve the system of nonlinear algebraic equations for the
parameters of the solution that can be obtained from the equation $W_1(g) =0$
from Eqs.(\ref{d2}).
\par 
The general solution of Eq..(\ref{d19}) is given by the special function
$V_{a_0,a_1,a_2,a_3,a_4}(\xi; 1,2,4)$. For several particular cases this
function can be reduced to the elliptic functions of Jacobi. Examples are
\begin{eqnarray}\label{d20}
V_{1,0,-(1+k^2),0,k^2}(\xi;1,2,4) &=& \textrm{sn}(\xi;k); \nonumber \\
V_{1-k^2,0,(2k^2 -1),0,-k^2}(\xi;1,2,4) &=& \textrm{cn}(\xi;k); \nonumber \\
V_{-(1-k^2),0,(2-k^2),0,-1}(\xi;1,2,4) &=& \textrm{dn}(\xi;k).
\end{eqnarray}
\par 
For the general case the system of nonlinear algebraic equations becomes (\ref{d21x}) from the Appendix.
This system of equations possesses exact solutions for $a_0,a_1,a_2,a_3,a_4$ 
but they are very long and we shall not reproduce them here. Instead of this
we shall give a solution for an illustrative particular case.
From the first of equations (\ref{d21x}) one obtains a solution as follows
\begin{equation}\label{d22}
a_4 = \frac{b_2 [-\alpha_1 - 2 \alpha_2 + \sqrt{(\alpha_1 + 2 \alpha_2)^2 - 40 \alpha_3 \alpha_5}]}{120 \alpha_5 \mu^2}
\end{equation}
Let us consider the particular case 
\begin{equation}\label{d23x}
\alpha_5 = \frac{(\alpha_1+2 \alpha_2)^2 }{40 \alpha_3}
\end{equation}
For this particular case a solution of the system of equations
(\ref{d21x}) is as follows
\begin{eqnarray}\label{d24}
a_4 &=& - \frac{\alpha_3 b_2}{3 \mu^2 (\alpha_1^2 + 2 \alpha_2)}\nonumber \\
a_3 &=& - \frac{2 \alpha_3 b_1}{3 \mu^2 (\alpha_1 + 2 \alpha_2)}\nonumber \\
a_2 &=& - \frac{4 \alpha_1 \alpha_3 b_0 b_2 + \alpha_1 \alpha_3 b_1^2 + 2 \alpha_0 \alpha_1 b_2 + 4 \alpha_0 \alpha_2 b_2 - 8 \alpha_3 \alpha_4 b_2}{4 \mu^2 b_2 \alpha_1(\alpha_1 + 2 \alpha_2)}\nonumber \\
a_1 &=& - \frac{b_1(12 \alpha_1 \alpha_3 b_0 b_2-\alpha_1 \alpha_3 b_1^2 + 6 \alpha_0 \alpha_1 b_2 + 12 \alpha_0 \alpha_2 b_2 - 24 \alpha_3 \alpha_4 b_2)}{12 \mu^2 \alpha_1 b_2^2 (\alpha_1 + \alpha_2)}\nonumber \\
a_0 &=& - \frac{1}{48 \mu^3 \alpha_1^2 \alpha_2 b_2^3(\alpha_1 + 2 \alpha_2)(7 \alpha_1 - 6 \alpha_2)} \bigg( 336 \mu \alpha_1^3 \alpha_3^2 b_0^2 b_2^2 - 84 \mu \alpha_1^3 \alpha_3^2 b_0 b_1^2 b_2 + \nonumber \\
&& 7 \mu \alpha_1^3 \alpha_3^2 b_1^4 - 288 \mu \alpha_1^2 \alpha_2 \alpha_3^2 b_0^2 b_2^2 + 72 \mu \alpha_1^2 \alpha_2 \alpha_3^2 b_0 b_1^2 b_2 - 6 \mu \alpha_1^2 \alpha_2 \alpha_3^2 b_1^4 + \nonumber \\
&& 336 \mu \alpha_0 \alpha_1^3 \alpha_3 b_0 b_2^2 - 42 \mu \alpha_0 \alpha_1^3 \alpha_3 b_1^2 b_2 + 384 \mu \alpha_0 \alpha_1^2 \alpha_2 \alpha_3 b_0 b_2^2 - 48 \mu \alpha_0 \alpha_1^2 \alpha_2 \alpha_3 b_1^2 b_2 - \nonumber \\
&& 576 \mu \alpha_0 \alpha_1 \alpha_2^2 \alpha_3 b_0 b_2^2 + 72 \mu \alpha_0 \alpha_1 \alpha_2^2 \alpha_3 b_1^2 b_2 - 1344 \mu \alpha_1^2 \alpha_3^2 \alpha_4 b_0 b_2^2 + 168 \mu \alpha_1^2 \alpha_3^2 \alpha_4 b_1^2 b_2 + \nonumber \\
&& 1152 \mu \alpha_1 \alpha_2 \alpha_3^2 \alpha_4 b_0 b_2^2 - 144 \mu \alpha_1 \alpha_2 \alpha_3^2 \alpha_4 b_1^2 b_2 + 24 \mu \alpha_0^2 \alpha_1^3 b_2^2 + 144 \mu \alpha_0^2 \alpha_1^2 \alpha_2 b_2^2 + \nonumber \\
&& 288 \mu \alpha_0^2 \alpha_1 \alpha_2^2 b_2^2 + 192 \mu \alpha_0^2 \alpha_2^3 b_2^2 - 672 \mu \alpha_0 \alpha_1^2 \alpha_3 \alpha_4 b_2^2 - 1728 \mu \alpha_0 \alpha_1 \alpha_2 \alpha_3 \alpha_4 b_2^2 - \nonumber \\
&& 768 \mu \alpha_0 \alpha_2^2 \alpha_3 \alpha_4 b_2^2 + 2304 \mu \alpha_1 \alpha_3^2 \alpha_4^2 b_2^2 + 768 \mu \alpha_2 \alpha_3^2 \alpha_4^2 b_2^2 + 240 \mu \alpha_1^3 \alpha_3 b_2^2 + \nonumber \\
&& 480 \mu \alpha_1^2 \alpha_2 \alpha_3 b_2^2 + 240 \nu \alpha_1^3 \alpha_3 b_2^2 + 480 \nu \alpha_1^2 \alpha_2 \alpha_3 b_2^2\bigg)
\end{eqnarray}
and the solution of Eq.(\ref{d1}) for the discussed particular case (\ref{d23x}) is
\begin{eqnarray}\label{d23}
u(\xi) = b_0 + b_1 V_{a_0,a_1,a_2,a_3,a_4}(\xi;1,2,4) + b_2 V_{a_0,a_1,a_2,a_3,a_4}^2(\xi;1,2,4)
\end{eqnarray}
\par
Let us now consider the particular case when $V_{a_0,a_1,a_2,a_3,a_4}(\xi;1,2,4)$ is reduced to the Jacobi elliptic function $\textrm{sn}(\xi;k)$.
In this case we have to set $a_0=1; a_1=0; a_2=-(1+k^2); a_3 = 0; a_4 = k^2$
in Eqs.(\ref{d21x}). The full solution of the system of algebraic equation is again very
long. In order to illustrate it we discuss the particular case given by Eq.(\ref{d23x}).
A solution of the obtained system is
\begin{eqnarray*}
b_0 &=&  \frac{1}{2 \mu \alpha_1 \alpha_3 (7 \alpha_1 - 6 \alpha_2)} \bigg[
21 \mu^3 \alpha_1^3 + 24 \mu^3 \alpha_1^2 \alpha_2 - 36 \mu^3 \alpha_1 \alpha_2^2 - 7 \mu \alpha_0 \alpha_1^2 - \nonumber \\
&& 8 \mu \alpha_0 \alpha_1 \alpha_2 + 12 \mu \alpha_0 \alpha_2^2 + 28 \alpha_4 \mu \alpha_3 \alpha_1 - 24 \mu \alpha_2 \alpha_3 \alpha_4+ \bigg(-(7 \alpha_1^2 + 8 \alpha_1 \alpha_2 - 12 \alpha_2^2) \mu \times \nonumber \\
&& (21 \mu^5 \alpha_1^4 + 24 \mu^5 \alpha_1^3 \alpha_2 - 36 \mu^5 \alpha_1^2 \alpha_2^2 - 5 \mu \alpha_0^2 \alpha_1^2 + 20 \mu \alpha_0^2 \alpha_2^2 - 80 \mu \alpha_0 \alpha_2 \alpha_3 \alpha_4 + 80 \mu \alpha_3^2 \alpha_4^2 + \nonumber \\
&& 20 \mu \alpha_1^2 \alpha_3 + 20 \nu \alpha_1^2 \alpha_3)\bigg)^{1/2}
\bigg]
\end{eqnarray*}
\begin{eqnarray}\label{d25}
b_1&=& 0  \nonumber \\
b_2&=& -\frac{3}{2 \mu \alpha_1 \alpha_3 (7 \alpha_1 - 6 \alpha_2)}
\bigg[\mu^3 \alpha_1(7 \alpha_1^2 + 8 \alpha_1 \alpha_2 -12 \alpha_2^2)  + 
\bigg(-(7 \alpha_1^2 + 8 \alpha_1 \alpha_2 - 12 \alpha_2^2) \mu \times \nonumber \\
&& (21 \mu^5 \alpha_1^4 + 24 \mu^5 \alpha_1^3 \alpha_2 - 36 \mu^5 \alpha_1^2 \alpha_2^2 - 5 \mu \alpha_0^2 \alpha_1^2 + 20 \mu \alpha_0^2 \alpha_2^2 - 80 \mu \alpha_0 \alpha_2 \alpha_3 \alpha_4 + 80 \mu \alpha_3^2 \alpha_4^2 + \nonumber \\
&& 20 \mu \alpha_1^2 \alpha_3 + 20 \nu \alpha_1^2 \alpha_3)\bigg)^{1/2} \bigg]
\nonumber \\
k &=& \frac{1}{\sqrt{2} \mu \alpha_1 (\alpha_1 + 2 \alpha_2)} \bigg[ \frac{1}{\mu(7 \alpha_1 - 6 \alpha_2)} \bigg( \alpha_1 (\alpha_1 + 2 \alpha_2)(7 \mu^3 \alpha_1^3 + 8 \mu^3 \alpha_1^2 \alpha_2 - 12 \mu^3 \alpha_1 \alpha_2^2 + \nonumber \\
&& (-(7 \alpha_1^2 + 8 \alpha_1 \alpha_2 - 12 \alpha_2^2) \mu (21 \mu^5 \alpha_1^4 + 24 \mu^5 \alpha_1^3 \alpha_2 - 36 \mu^5 \alpha_1^2 \alpha_2^2 - 5 \mu \alpha_0^2 \alpha_1^2 + 20 \mu \alpha_0^2 \alpha_2^2 - \nonumber \\
&& 80 \mu \alpha_0 \alpha_2 \alpha_3 \alpha_4 + 80 \mu \alpha_3^2 \alpha_4^2 +20 \mu \alpha_1^2 \alpha_3 + 20 \nu \alpha_1^2 \alpha_3))^{1/2}) \bigg) \bigg]^{1/2}
\end{eqnarray}
and the solution of Eq.(\ref{d1}) is
\begin{eqnarray}\label{d26}
u(\xi) &=& \frac{1}{2 \mu \alpha_1 \alpha_3 (7 \alpha_1 - 6 \alpha_2)} \bigg[
21 \mu^3 \alpha_1^3 + 24 \mu^3 \alpha_1^2 \alpha_2 - 36 \mu^3 \alpha_1 \alpha_2^2 - 7 \mu \alpha_0 \alpha_1^2 - \nonumber \\
&& 8 \mu \alpha_0 \alpha_1 \alpha_2 + 12 \mu \alpha_0 \alpha_2^2 + 28 \alpha_4 \mu \alpha_3 \alpha_1 - 24 \mu \alpha_2 \alpha_3 \alpha_4+ \bigg(-(7 \alpha_1^2 + 8 \alpha_1 \alpha_2 - 12 \alpha_2^2) \mu \times \nonumber \\
&& (21 \mu^5 \alpha_1^4 + 24 \mu^5 \alpha_1^3 \alpha_2 - 36 \mu^5 \alpha_1^2 \alpha_2^2 - 5 \mu \alpha_0^2 \alpha_1^2 + 20 \mu \alpha_0^2 \alpha_2^2 - 80 \mu \alpha_0 \alpha_2 \alpha_3 \alpha_4 + 80 \mu \alpha_3^2 \alpha_4^2 + \nonumber \\
&& 20 \mu \alpha_1^2 \alpha_3 + 20 \nu \alpha_1^2 \alpha_3)\bigg)^{1/2} -
\frac{3}{2 \mu \alpha_1 \alpha_3 (7 \alpha_1 - 6 \alpha_2)}
\bigg[\mu^3 \alpha_1(7 \alpha_1^2 + 8 \alpha_1 \alpha_2 -12 \alpha_2^2)  + \nonumber \\
&&\bigg(-(7 \alpha_1^2 + 8 \alpha_1 \alpha_2 - 12 \alpha_2^2) \mu \times  (21 \mu^5 \alpha_1^4 + 24 \mu^5 \alpha_1^3 \alpha_2 - 36 \mu^5 \alpha_1^2 \alpha_2^2 - 5 \mu \alpha_0^2 \alpha_1^2 + \nonumber \\
&& 20 \mu \alpha_0^2 \alpha_2^2 - 80 \mu \alpha_0 \alpha_2 \alpha_3 \alpha_4 + 80 \mu \alpha_3^2 \alpha_4^2 + 
20 \mu \alpha_1^2 \alpha_3 + 20 \nu \alpha_1^2 \alpha_3)\bigg)^{1/2} \bigg] \times
\nonumber\\
&& \textrm{sn}^2 \bigg \{\xi; \frac{1}{\sqrt{2} \mu \alpha_1 (\alpha_1 + 2 \alpha_2)} \bigg[ \frac{1}{\mu(7 \alpha_1 - 6 \alpha_2)} \bigg( \alpha_1 (\alpha_1 + 2 \alpha_2)(7 \mu^3 \alpha_1^3 + 8 \mu^3 \alpha_1^2 \alpha_2 - \nonumber \\
&& 12 \mu^3 \alpha_1 \alpha_2^2 +  (-(7 \alpha_1^2 + 8 \alpha_1 \alpha_2 - 12 \alpha_2^2) \mu (21 \mu^5 \alpha_1^4 + 24 \mu^5 \alpha_1^3 \alpha_2 - 36 \mu^5 \alpha_1^2 \alpha_2^2 -
\nonumber \\
&& 5 \mu \alpha_0^2 \alpha_1^2 + 20 \mu \alpha_0^2 \alpha_2^2 - 80 \mu \alpha_0 \alpha_2 \alpha_3 \alpha_4 + 80 \mu \alpha_3^2 \alpha_4^2 +20 \mu \alpha_1^2 \alpha_3 + 20 \nu \alpha_1^2 \alpha_3))^{1/2}) \bigg) \bigg]^{1/2} \bigg\} \nonumber \\
&& \xi = \mu x + \nu t.
\end{eqnarray}
\par
Let us now consider the particular case when the simplest equation is the  
Riccati equation
\begin{equation}\label{d27}
g_{(1)} = c_0 + c_1g + c_2g^2
\end{equation}
In this case
\begin{equation}\label{d28}
a_0 = c_0^2; \ a_1 = 2 c_0c_1; \ a_2= c_1^2 + 2 c_0 c_2; \ a_3 = 2 c_1c_2; \
a_4 = c_2^2
\end{equation}
The full solution of the system of algebraic equation is again very long.
We illustrate it for the particular case given by Eq.(\ref{d23x}).
A solution of the system of algebraic equations is
\begin{eqnarray} \label{d28}
b_0 &=&  -\frac{8 \mu^2 \alpha_1^2 c_0 c_2 + \mu^2 \alpha_1^2 c_1^2 + 16 \mu^2 \alpha_1 \alpha_2 c_0 c_2 + 2 \mu^2 \alpha_1 \alpha_2 c_1^2 + 2 \alpha_0 \alpha_1 + 4 \alpha_0 \alpha_2 - 8 \alpha_3 \alpha_4}{4 \alpha_1 \alpha_3}\nonumber \\
b_1 &=& - \frac{3 \mu^2 c_1 c_2 (\alpha_1 +2 \alpha_2)}{\alpha_3}\nonumber \\
b_2 &=& - \frac{3 \mu^2 c_2^2 (\alpha_1 + 2 \alpha_2)}{\alpha_3} \nonumber \\
\nu &=& - \frac{\mu}{80 \alpha_1^2 \alpha_3} \bigg( 12 \mu^4 \alpha_1^4 c_0^2 c_2^2 - 56 \mu^4 \alpha_1^4 c_0 c_1^2 c_2 + 7 \mu^4 \alpha_1^4 c_1^4 + 128 \mu^4 \alpha_1^3 \alpha_2 c_0^2 c_2^2 -\nonumber \\
&& 64 \mu^4 \alpha_1^3 \alpha_2 c_0 c_1^2 c_2 + 8 \mu^4 \alpha_1^3 \alpha_2 c_1^4 - 192 \mu^4 \alpha_1^2 \alpha_2^2 c_0^2 c_2^2 + 96 \mu^4 \alpha_1^2 \alpha_2^2 c_0 c_1^2 c_2 - 
\nonumber \\
&& 12 \mu^4 \alpha_1^2 \alpha_2^2 c_1^4 - 20 \alpha_0^2 \alpha_1^2 + 80 \alpha_0^2 \alpha_2^2 - 320 \alpha_0 \alpha_2 \alpha_3 \alpha_4 + 320 \alpha_3^2 \alpha_4^2 + 80 \alpha_1^2 \alpha_3\bigg) 
\end{eqnarray}
From the solutions of the Riccati equation (\ref{d26}) here we shall discuss
the solution ($c$ is a constant of integration)
\begin{equation}\label{d29}
g(\xi) = \frac{1}{2c_2} \bigg\{ \tanh \bigg[ -\frac{(c_1^2 - 4 c_0 c_2)^{1/2}}{2}(\xi +c) \bigg]  - c_1\bigg \}
\end{equation}
valid when $c_1^2> 4 c_0 c_2 $ and $(2 c_2 g(\xi) + c_1)^2 < c_1^2 - 4 c_0 c_2$.
Then the solution of Eq.(\ref{d1}) is
\begin{eqnarray}\label{d30}
u(\xi) &=& 
-\frac{8 \mu^2 \alpha_1^2 c_0 c_2 + \mu^2 \alpha_1^2 c_1^2 + 16 \mu^2 \alpha_1 \alpha_2 c_0 c_2 + 2 \mu^2 \alpha_1 \alpha_2 c_1^2 + 2 \alpha_0 \alpha_1 + 4 \alpha_0 \alpha_2 - 8 \alpha_3 \alpha_4}{4 \alpha_1 \alpha_3}\nonumber \\
&& - \frac{3 \mu^2 c_1 (\alpha_1 +2 \alpha_2)}{2 \alpha_3}\bigg\{ \tanh \bigg[ -\frac{(c_1^2 - 4 c_0 c_2)^{1/2}}{2}(\xi +c) \bigg]  - c_1\bigg \} - \nonumber \\
&& - \frac{3 \mu^2  (\alpha_1 + 2 \alpha_2)}{4 \alpha_3} \bigg\{ \tanh \bigg[ -\frac{(c_1^2 - 4 c_0 c_2)^{1/2}}{2}(\xi +c) \bigg]  - c_1\bigg \}^2
\end{eqnarray}
which is a kink.
\subsection{Case $m = 5$}
In this case $q=3$. Then
\begin{equation}\label{d31}
u(\xi) = b_0 + b_1 g(\xi) + b_2 g^2(\xi) + b_3 g^3(\xi); \ \ \xi = \mu x + \nu t
\end{equation}
and 
\begin{equation}\label{d32}
g_{(1)}^2 = a_0 + a_1 g + a_2 g^2 + a_3 g^3 + a_4 g^4 + a_5 g^5
\end{equation}
In addition we have to solve the system of nonlinear algebraic equations for the
parameters of the solution that can be obtained from the equation $W_1(g) =0$
from Eqs.(\ref{d2}).
\par 
The general solution of Eq. (\ref{d32}) is given by the function
$V_{a_0,a_1,a_2,a_3,a_4,a_5}(\xi;1,2,5)$.
\par
The system of algebraic equations obtained from Eq.(\ref{d1}) is (\ref{d35x}).
The full solution of this system is quite long. We shall illustrate the solution 
for the particular case 
\begin{equation}\label{d36}
\alpha_2 = \alpha_3 = 1; \ \alpha_5 = \frac{(\alpha_1 + 2 \alpha_2)^2}{40 \alpha_3} = 
\frac{(\alpha_1 +2)^2}{40}
\end{equation}
The solution for this case is
\begin{eqnarray*}\label{d37a}
a_5 &=& - \frac{4b_3}{81 \mu^2} \nonumber \\
a_4 &=& - \frac{20 b_2}{243 \mu^2} \nonumber \\
a_3 &=& - \frac{4(b_2^2 + 7 b_1 b_3)}{243 \mu^2 b_3} \nonumber \\
a_2 &=&  - \frac{2(162 \alpha_1 b_0 b_3^2 + 72 \alpha_1 b_1 b_2 b_3 - 10 \alpha_1 b_2^3 + 243 \alpha_0 b_3^2 - 324 \alpha_4 b_3^2)}{2187 \mu^2 b_3^2 \alpha_1} \nonumber \\
\end{eqnarray*}
\begin{eqnarray} \label{d37}
a_1 &=& - \frac{4}{6561 \mu^2 \alpha_1 b_3^3} \bigg( 162 \alpha_1 b_0 b_2 b_3^2 + 108 \alpha_1 b_1^2 b_3^2 - 63 \alpha_1 b_1 b_2^2 b_3 + 8 \alpha_1 b_2^4 + \nonumber \\
&& 243 \alpha_0 b_2 b_3^2 - 324 \alpha_4 b_2 b_3 ^2 \bigg) \nonumber \\
a_0 &=& - \frac{2}{6561 \mu^2 \alpha_1 b_3^4 } \bigg( 648 \alpha_1 b_0 b_1 b_3^3 - 162 \alpha_1 b_0 b_2^2 b_3^2 - 144 \alpha_1 b_1 ^2 b_2 b_3^2 + 68 \alpha_1 b_1 b_2^3 b_3 - 8 \alpha_1 b_2^5 + \nonumber \\
&& 972 \alpha_0 b_1 b_3^3 - 243 \alpha_0 b_2^2 b_3^2 - 1296 \alpha_4 b_1 b_3^3 + 324 \alpha_4 b_2^2 b_3^2\bigg) \nonumber \\
\mu &=& -\frac{-65610 \nu \alpha_1 b_3^4}{T^*} \nonumber \\
T^* &=& 4374 \alpha_1^2 b_0^2 b_3^4 - 2916 \alpha_1^2 b_0 b_1 b_2 b_3^3 + 648 \alpha_1^2 b_0 b_2^3 b_3^2 - 216 \alpha_1^2 b_1^3 b_3^3 + 702 \alpha_1^2 b_1^2 b_2^2 b_3^2 - \nonumber \\
&& 288 \alpha_1^2 b_1 b_2^4 b_3 + 32 \alpha_1^2 b_2^6 + 13122 \alpha_0 \alpha_1 b_0 b_3^4 - 4374 \alpha_0 \alpha_1 b_1 b_2 b_3^3 + 972 \alpha_0 \alpha_1 b_2^3 b_3^2 - \nonumber \\
&& 17496 \alpha_1 \alpha_4 b_0 b_3^4 + 5832 \alpha_1 \alpha_4 b_1 b_2 b_3^3 - 1296 \alpha_1 \alpha_4 b_2^3 b_3^2 + 59049 \alpha_0^2 b_3^4 - \nonumber \\
&& 288684 \alpha_0 \alpha_4 b_3^4 + 279936 \alpha_4^2 b_3^4 + 65610 \alpha_1 b_3^4
\end{eqnarray}
and the solution of Eq.(\ref{d1}) (for the particular case given by Eqs.(\ref{d36}))
is
\begin{eqnarray}\label{d38}
u(\xi) &=& b_0 + b_1 V_{a_0,a_1,a_2,a_3,a_4,a_5}(\xi; 1,2,5) + b_2 V^2_{a_0,a_1,a_2,a_3,a_4,a_5}(\xi; 1,2,5) + \nonumber \\
&& b_3  V^3_{a_0,a_1,a_2,a_3,a_4,a_5}(\xi; 1,2,5) 
\end{eqnarray}
where 
\begin{eqnarray}\label{d39}
\xi =\frac{65610 \nu \alpha_1 b_3^4}{T^*} x + \nu t 
\end{eqnarray}
\subsection{Case $m=6$}
In this case $q=4$. We note that with increasing $m$ (and $q$) the number of equations
in the nonlinear algebraic system we have to solve becomes large very fast. When the 
number of equations become larger that the number of parameters of the solution then
the system could not have any nontrivial solution.
\par 
For the case $m=6$
\begin{equation}\label{d40}
u(\xi) = b_0 + b_1 g(\xi) + b_2 g^2(\xi) + b_3 g^3(\xi) + b_4 g^4(\xi); \ \ \xi = \mu x + \nu t
\end{equation}
and 
\begin{equation}\label{d41}
g_{(1)}^2 = a_0 + a_1 g + a_2 g^2 + a_3 g^3 + a_4 g^4 + a_5 g^5 + a_6 g^6
\end{equation}
In addition we have to solve the system of nonlinear algebraic equations for the
parameters of the solution that can be obtained from the equation $W_1(g) =0$
from Eqs.(\ref{d2}).
The general solution of Eq.(\ref{d41}) is given by the function $V_{a_0,a_1,a_2,a_3,a_4,a_5,a_6}(\xi; 1,2,6)$.
\par
The system of nonlinear algebraic equations connected to Eq.(\ref{d1})
is very large. It still has a solution. In order to illustrate this solution we 
shall consider a particular case of Eq.(\ref{d41}) 
namely the equation of Abel of first kind
\begin{equation}\label{d42}
g_{(1)} = c_0 + c_1 g + c_2 g^2 + c_3 g^3
\end{equation}
The corresponding system of nonlinear algebraic equations consists of 12 equations.
The solution is very long and we shall illustrate it for the particular case
when $\alpha_5 = \frac{(\alpha_1 + 2 \alpha_2)^2}{40 \alpha_3}$.  For this case
one solution of the system of algebraic equations is
\begin{eqnarray}\label{d43}
b_4&=& -\frac{4 \mu^2 c_2^4 (\alpha_1 + 2 \alpha_2)}{3 \alpha_3 c_1^2} \nonumber \\
b_3&=& - \frac{16 \mu^2 c_2^3  (\alpha_1 + 2 \alpha_2}{3 \alpha_3 c_1} \nonumber \\
b_2&=& - \frac{8 \mu^2 c_2^2 (\alpha_1 + 2 \alpha_2)}{\alpha_3} \nonumber \\
b_1 &=& - \frac{16 \mu^2 c_2 (3 \alpha_1 c_0 c_2 + 4 \alpha_1 c_1^2 + 6 \alpha_2 c_0 c_2 + 8 \alpha_2 c_1^2)}{15 \alpha_3 c_1}\nonumber \\
b_0 &=& - \frac{1}{30 \alpha_1 \alpha_3} \bigg( 96 \mu^2 \alpha_1^2 c_0 c_2 + 8 \mu^2 \alpha_1^2 c_1^2 + 192 \mu^2 \alpha_1 \alpha_2 c_0 c_2 + 16 \mu^2 \alpha_1 \alpha_2 c_1^2 + \nonumber \\
&& 15 \alpha_0 \alpha_1 + 30 \alpha_0 \alpha_2 - 60 \alpha_3 \alpha_4 \bigg) \nonumber \\
c_3 &=& \frac{c_1^3}{27 c_0^2}\nonumber \\
c_2 &=& \frac{c_1^2}{3 c_0} \nonumber \\
\mu &=&  \frac{4 \nu \alpha_1^2 \alpha_3}{\alpha_0^2 \alpha_1^2 - 4 \alpha_0^2 \alpha_2^2 + 16 \alpha_0 \alpha_2 \alpha_3 \alpha_4 - 16 \alpha_3^2 \alpha_4^2 - 4 \alpha_1^2 \alpha_3}
\end{eqnarray}
The Abel equations becomes: 
\begin{equation}\label{d44}
g_{(1)}=c_0 + c_1 g + \frac{c_1^2}{3 c_0} g^2 + \frac{c_1^3}{27 c_0^2} g^3
\end{equation}
Its solution is:
\begin{equation}\label{d45}
g(\xi) = V_{c_0^2,2 c_0 c_1,\frac{5c_1^2}{3},\frac{20 c_1^3}{27 c_0},\frac{5 c_1^4}{27 c_0^2},\frac{2 c_1^5}{81 c_0^3},\frac{c_1^6}{729 c_0^4}}
\end{equation}
The solution of Eq. (\ref{d1}) is 
\begin{eqnarray}\label{d46}
u(\xi) &=& b_0 + b_1V_{c_0^2,2 c_0 c_1,\frac{5c_1^2}{3},\frac{20 c_1^3}{27 c_0},\frac{5 c_1^4}{27 c_0^2},\frac{2 c_1^5}{81 c_0^3},\frac{c_1^6}{729 c_0^4}} + b_2 V^2_{c_0^2,2 c_0 c_1,\frac{5c_1^2}{3},\frac{20 c_1^3}{27 c_0},\frac{5 c_1^4}{27 c_0^2},\frac{2 c_1^5}{81 c_0^3},\frac{c_1^6}{729 c_0^4}} + \nonumber \\
&& b_3 V^3_{c_0^2,2 c_0 c_1,\frac{5c_1^2}{3},\frac{20 c_1^3}{27 c_0},\frac{5 c_1^4}{27 c_0^2},\frac{2 c_1^5}{81 c_0^3},\frac{c_1^6}{729 c_0^4}}
+b_4 V^4_{c_0^2,2 c_0 c_1,\frac{5c_1^2}{3},\frac{20 c_1^3}{27 c_0},\frac{5 c_1^4}{27 c_0^2},\frac{2 c_1^5}{81 c_0^3},\frac{c_1^6}{729 c_0^4}}
\end{eqnarray}
where
\begin{eqnarray*}
\xi &=&  \frac{4 \nu \alpha_1^2 \alpha_3}{\alpha_0^2 \alpha_1^2 - 4 \alpha_0^2 \alpha_2^2 + 16 \alpha_0 \alpha_2 \alpha_3 \alpha_4 - 16 \alpha_3^2 \alpha_4^2 - 4 \alpha_1^2 \alpha_3} x + \nu t
\end{eqnarray*}
Let now $c_0 = \frac{c_2}{3 c_3} \left( c_1 - \frac{2 c_2^2}{9 c_3}\right)$ and $c_3<0$. Then the Abel equation
\begin{equation}\label{d47}
g_{(1)}= \frac{c_2}{3 c_3} \left( c_1 - \frac{2 c_2^2}{9 c_3} + c_1 g + c_2 g^2 + 
c_3 g^3 \right)
\end{equation}
has the solution \cite{kamke}
\begin{equation}\label{d48}
g(\xi) = \frac{\exp\left[\left( c_1 - \frac{c_2^2}{3 c_3} \right) \xi \right]}{\sqrt{
C - 2 c_3\exp\left[2\left( c_1 - \frac{c_2^2}{3 c_3} \right) \xi \right] }} - \frac{c_2}{3 c_3}
\end{equation}
where $C$ is the constant of integration. Let us assume that $C>0$. The use of the Abel equation (\ref{d47}) leads to a following solution of the system of nonlinear algebraic equations (for the particular case $\alpha_5 = \frac{(\alpha_1 + 2 \alpha_2)^2}{40 \alpha_3}$)
\begin{eqnarray}\label{d49}
b_4 &=& - \frac{12 \mu^2 c_3^2 (\alpha_1 + 2 \alpha_2)}{\alpha_3} \nonumber \\
b_3 &=& -\frac{16 \mu^2 c_2 c_3 (\alpha_1 + 2\alpha_2)}{\alpha_3} \nonumber \\
b_2 &=& - \frac{4 \mu^2 (3 \alpha_1 c_1 c_3 + \alpha_1 c_2^2 + 6 \alpha_2 c_1 c_3 + 2 \alpha_2 c_2^2)}{\alpha_3} \nonumber \\
b_1 &=& - \frac{8 \mu^2 (9 \alpha_1 c_1 c_3 - \alpha_1 c_2^2 + 18 \alpha_2 c_1 c_3 - 2 \alpha_2 c_2^2) }{9 \alpha_3 c_3 } \nonumber \\
\nu &=& - \frac{\mu}{1620 \alpha_1^2 \alpha_3 c_3^4} \bigg[2268 \mu^4 \alpha_1^4 c_1^4 c_3^4 - 3024 \mu^4 \alpha_1^4 c_1^3 c_2^2 c_3^3 + 1512 \mu^4 \alpha_1^4 c_1^2 c_2^4 c_3^2 - \nonumber \\
&& 336 \mu^4 \alpha_1^4 c_1 c_2^6 c_3 + 28 \mu^4 \alpha_1^4 c_2^8 + 2592 \mu^4 \alpha_1^3 \alpha_2 c_1^4 c_3^4 - 3456 \mu^4 \alpha_1^3 \alpha_2 c_1^3 c_2^2 c_3^3 + \nonumber \\
&& 1728 \mu^4 \alpha[1]^3 \alpha_2 c_1^2 c_2^4 c_3^2 - 384 \mu^4 \alpha_1^3 \alpha_2 c_1 c_2^6 c_3 + 32 \mu^4 \alpha_1^3 \alpha_2 c_2^8 - 3888 \mu^4 \alpha_1^2 \alpha_2^2 c_1^4 c_3^4 + \nonumber \\
&& 5184 \mu^4 \alpha_1^2 \alpha_2^2 c_1^3 c_2^2 c_3^3 - 2592 \mu^4 \alpha_1^2 \alpha_2^2 c_1^2 c_2^4 c_3^2 + 576 \mu^4 \alpha_1^2 \alpha_2^2 c_1 c_2^6 c_3 - 48 \mu^4 \alpha_1^2 \alpha_2^2 c_2^8 - \nonumber \\
&& 405 \alpha_0^2 \alpha_1^2 c_3^4 + 1620 \alpha_0^2 \alpha_2^2 c_3^4 - 6480 \alpha_0 \alpha_2 \alpha_3 \alpha_4 c_3^4 + 6480 \alpha_3^2 \alpha_4^2 c_3^4 + 1620 \alpha_1^2 \alpha_3 c_3^4 \bigg] \nonumber \\
\end{eqnarray}
Hence the solution of Eq.(\ref{d1}) becomes
\begin{eqnarray}\label{d50}
u(\xi)&=& b_0  - \frac{8 \mu^2 (9 \alpha_1 c_1 c_3 - \alpha_1 c_2^2 + 18 \alpha_2 c_1 c_3 - 2 \alpha_2 c_2^2) }{9 \alpha_3 c_3 } \times \nonumber \\
&& \left(\frac{\exp\left[\left( c_1 - \frac{c_2^2}{3 c_3} \right) \xi \right]}{\sqrt{
C - 2 c_3\exp\left[2\left( c_1 - \frac{c_2^2}{3 c_3} \right) \xi \right] }} - \frac{c_2}{3 c_3} \right) - \nonumber \\
&& - \frac{4 \mu^2 (3 \alpha_1 c_1 c_3 + \alpha_1 c_2^2 + 6 \alpha_2 c_1 c_3 + 2 \alpha_2 c_2^2)}{\alpha_3} \times \nonumber \\
&& \left( \frac{\exp\left[ \left( c_1 - \frac{c_2^2}{3 c_3} \right) \xi \right]}{\sqrt{
C - 2 c_3\exp\left[2\left( c_1 - \frac{c_2^2}{3 c_3} \right) \xi \right] }} - \frac{c_2}{3 c_3} \right)^2 - \nonumber \\
&& -\frac{16 \mu^2 c_2 c_3 (\alpha_1 + 2\alpha_2)}{\alpha_3} \left(\frac{\exp\left[\left( c_1 - \frac{c_2^2}{3 c_3} \right) \xi \right]}{\sqrt{
C - 2 c_3\exp\left[2\left( c_1 - \frac{c_2^2}{3 c_3} \right) \xi \right] }} - \frac{c_2}{3 c_3}\right)^3 - \nonumber \\
&& - \frac{12 \mu^2 c_3^2 (\alpha_1 + 2 \alpha_2)}{\alpha_3}  \left( \frac{\exp\left[\left( c_1 - \frac{c_2^2}{3 c_3} \right) \xi \right]}{\sqrt{
C - 2 c_3\exp\left[2\left( c_1 - \frac{c_2^2}{3 c_3} \right) \xi \right] }} - \frac{c_2}{3 c_3} \right)^4
\nonumber \\
\end{eqnarray}
that is a solution of kink's kind as the solution (\ref{d30}).
\section{Concluding remarks}
In this article we have discussed a version of the method of simplest equation applicable
to a class of nonlinear partial differential equations that are much used as model equations
in the area of natural sciences. Eq.(\ref{a9}) was used as simplest equation and we have
described a methodology based on the concept for balance equations. This methodology reduces the 
studied nonlinear partial differential equations to systems of nonlinear algebraic equations. Any nontrivial
solution of the obtained system of algebraic equations  leads to an exact solution of the 
corresponding nonlinear partial differential equation. Discussed examples have shown the 
effectivity of the methodology of this variant of the method of simplest equation.

\begin{appendix}
\section{Systems of algebraic equations connected to Eq.(\ref{d1})}
\subsection{Case $m=3$}
\begin{eqnarray}\label{d13x}
0&=& \frac{3}{2} \alpha_1 \mu^2 a_3 b_1 + 3 \alpha_2 \mu^2 b_1 a_3 
+ \alpha_3  b_1^2 + \frac{45}{2} \alpha_5 \mu^4 a_3^2\nonumber \\
0&=& 15 \mu^4 a_2 a_3 \alpha_5 + \mu^2 a_2 \alpha_1 b_1 + \mu^2 a_2 \alpha_2 b_1 + 3 \mu^2 a_3 \alpha_2 b_0 + 3 \mu^2 a_3 \alpha_4  + \nonumber \\
&& 2 \alpha_3 b_0 b_1 + 
\alpha_0 b_1\nonumber \\
0&=&\nu + \mu + \alpha_0 \mu b_0 + \frac{1}{2} \alpha_1 \mu^3 a_1 b_1 + 
\alpha_2 \mu^3 b_0 a_2  + \alpha_3 \mu b_0^2  + \nonumber \\
&& \alpha_4 \mu^3 a_2 + \alpha_5 \mu^5 \bigg(\frac{9}{2} a_3  a_1 + a_2^2  \bigg)
\end{eqnarray}
\subsection{Case $m=4$}
\begin{eqnarray}\label{d21x}
0&=&360 \mu^4 a_4^2 \alpha_5  + 6 \mu^2 a_4 \alpha_1 b_2 + 12 \mu^2 a_4   \alpha_2 b_2  + \alpha_3 b_2^2 \nonumber \\
0&=&2 \alpha_1 \mu^2  b_2 (5 a_3 b_2 + 2 a_4 b_1) + 
6 \alpha_1 \mu^2 a_4 b_2 b_1 + 
24 \alpha_2 \mu^2   b_1 a_4 b_2 + \nonumber \\
&&\alpha_2 \mu^2 b_2 (15 a_3 b_2 + 6 a_4 b_1) + 
5 \alpha_3  b_1 b_2^2   + \alpha_5 \mu^4 (840 a_3 a_4 b_2 + 120 a_4^2 b_1) \nonumber \\
0&=&2 \alpha_0 b_2^2 + 2 \alpha_1 \mu^2 (4 a_2 b_2 + (3/2) a_3 b_1) b_2 +
\alpha_1 \mu^2 (5 a_3 b_2 + 2 a_4 b_1) b_1 + \nonumber \\
&&24 \alpha_2 \mu^2 b_0 a _4 b_2 + \alpha_2 \mu^2 b_1(15 a_3 b_2 + 6 a_4 b_1) + 
\alpha_2 \mu^2 b_2 (8 a_2 b_2 +3 a_3 b_1) +  \nonumber \\
&& 2 \alpha_3(2 b_0 b_2 + b_1^2)b_2 +  2 \alpha_3 b_1^2 b_2 + 
24 \alpha_4 \mu^2 a_4 b_2 + \alpha_5 \mu^4 (480 a_2 a_4 b_2 + 
210 a_3^2 b_2 + \nonumber \\
&& 120 a_3 a_4 b_1)  \nonumber \\
0&=& 3 \alpha_0  b_1 b_2 + 2 \alpha_1 \mu^2 (3 a_1 b_2 + a_2 b_1) b_2 + 
\alpha_1 \mu^2 (4 a_2 b_2 + (3/2) a_3 b_1) b_1 + \nonumber \\
&& \alpha_2 \mu^2 b_0 (15 a_3 b_2 + 6 a_4 b_1) + \alpha_2 \mu^2 b_1 (8 a_2 b_2 
+ 3 a_3 b_1) + 
\alpha_2 \mu^2 b_2(3 a_1 b_2 + a_2 b_1) + \nonumber \\
&& 4 \alpha_3 b_0 b_1 b_2 + \alpha_3 b_1(2 b_0 b_2 + b_1^2) + \alpha_4 \mu^2 (15 a_3 b_2 + 6 a_4 b_1) + 
\alpha_5 \mu^4[(45/2)a_3^2 b_1 + \nonumber \\
&& 270 a_4 b_2 a_1 + 195 a_3 b_2 a_2 + 60 a_4 b_1 a_2]\nonumber \\
0&=& 2 (\nu + \mu) b_2 + 2 \alpha_0 \mu b_0 b_2 + 
\alpha_0 \mu b_1^2 + 2 \alpha_1 \mu^3 (2 a_0 b_2 + (1/2) a_1 b_1) b_2 + 
\nonumber \\
&& \alpha_1 \mu^3 (3 a_1 b_2 + a_2 b_1) b_1 + \alpha_2  \mu^3 b_0 (8 a_2 b_2 + 
3 a_3 b_1) + \alpha_2 \mu^3 b_1 (3 a_1 b_2 + a_2 b_1) + \nonumber \\
&& 2 \alpha_3 \mu b_0^2 b_2 + 2 \alpha_3 \mu b_0 b_1^2   + \alpha_4 \mu^3 (8 a_2 b_2 + 
3 a_3 b_1) + \alpha_5 \mu^5 (144 a_0 a_4 b_2 + \nonumber \\
&& 84 a_1 a_3b _2 + 30 a_1 a_4 b_1 + 32 a_2^2 b_2 + 15 a_2 a_3 b_1) \nonumber \\
0&=& (\nu + \mu) b_1 + \alpha_0 \mu b_0 b_1 + \alpha_1 \mu^3 (2 a_0 b_2 +
(1/2) a_1 b_1) b_1 + \alpha_2 \mu^3 b_0 (3 a_1 b_2 + a_2 b_1) +\nonumber \\
&&\alpha_3 \mu b_0^2 b_1 + \alpha_4 \mu^3 (3 a_1 b_2 + a_2 b_1) + 
\alpha_5 \mu^5 (a_2^2 b_1 + 12 a_4 b_1 a_0 + 15 a_2 b_2 a_1 + \nonumber \\
&& 30 a_3 b_2 a_0 + (9/2) a_3 b_1 a_1)
\end{eqnarray}
\subsection{Case $m=5$}
\begin{eqnarray*}\label{d34x}
0&=&(81/2) \alpha_1 \mu^3 a_5 b_3^2 + 81 \alpha_2 \mu^3 b_3^2 a_5 + 3 \alpha_3 \mu b_3^3 + (10935/2) \alpha_5 \mu^5 a_5^2 b_3 \nonumber \\
0&=&\alpha_1 \mu^3 (12 a_4 b_3 + 7 a_5 b_2) b_3 + 27 \alpha_1 \mu^3 a_5 b_3 b_2 + 81 \alpha_2 \mu^3 b_2 a_5 b_3 + \nonumber \\
&& \alpha_2 \mu^3 b_3 (60 a_4 b_3 + 35 a_5 b_2) + 8 \alpha_3 \mu b_2 b_3^2 + \alpha_5 \mu^5 (7656 a_4 a_5 b_3 + 1820 a_5^2 b_2) \nonumber \\
0&=&3 \alpha_1 \mu^3 ((21/2) a_3 b_3 + 6 a_4 b_2 + (5/2) a_5 b_1) b_3 + 2 \alpha_1 \mu^3 (12 a_4 b_3 + 7 a_5 b_2) b_2 + \nonumber \\
&& (27/2) \alpha_1 \mu^3 a_5 b_3 b_1 + 81 \alpha_2 \mu^3 b_1 a_5 b_3 + \alpha_2 \mu^3 b_2 (60 a_4 b_3 + 35 a_5 b_2) + \nonumber \\
&& \alpha_2 \mu^3 b_3 (42 a_3 b_3 + 24 a_4 b_2 + 10 a_5 b_1) + 3 \alpha_3 \mu (2 b_1 b_3 + b_2^2) b_3 + 4 \alpha_3 \mu b_2^2 b_3 + \nonumber \\
&& \alpha_3 \mu b_3^2 b_1 + \alpha_5 \mu^5 (2520 a_4^2 b_3 + 385 a_5^2 b_1 + (10605/2) a_5 b_3 a_3 + 2394 a_5 b_2 a_4) \nonumber \\
0&=& 3 \alpha_0 \mu b_3^2 + 3 \alpha_1 \mu^3 (9 a_2 b_3 + 5 a_3 b_2 + 2 a_4 b_1) b_3 + 2 \alpha_1 \mu^3 ((21/2) a_3 b_3 + 6 a_4 b_2 + \nonumber \\
&& (5/2) a_5 b_1) b_2 + \alpha_1 \mu^3 (12 a_4 b_3 + 7 a_5 b_2) b_1 + 81 \alpha_2 \mu^3 b_0 a_5 b_3 + \nonumber \\
&& \alpha_2 \mu^3 b_1 (60 a_4 b_3 + 35 a_5 b_2) + \alpha_2 \mu^3 b_2 (42 a_3 b_3 + 24 a_4 b_2 + 10 a_5 b_1) + \nonumber \\
&& \alpha_2 \mu^3 b_3 (27 a_2 b_3 + 15 a_3 b_2 + 6 a_4 b_1) + 3 \alpha_3 \mu (2 b_0 b_3 + 2 b_1 b_2) b_3 + 2 \alpha_3 \mu (2 b_1 b_3 + b_2^2) b_2 + \nonumber \\
&& 2 \alpha_3 \mu b_2 b_3 b_1 + 81 \alpha_4 \mu^3 a_5 b_3 + \alpha_5 \mu^5 (3645 a_2 a_5 b_3 + 3240 a_3 a_4 b_3 + \nonumber \\
&& 1560 a_3 a_5 b_2 + 720 a_4^2 b_2 + 462 a_4 a_5 b_1) \nonumber \\
0&=& 5\alpha_0 \mu b_2 b_3 + 3 \alpha_1 \mu^3 ((15/2) a_1 b_3 + 4 a_2 b_2 + (3/2) a_3 b_1) b_3 + 2 \alpha_1 \mu^3 (9 a_2 b_3 + \nonumber \\
&& 5 a_3 b_2 + 2 a_4 b_1) b_2 + \alpha_1 \mu^3 ((21/2) a_3 b_3 + 6 a_4 b_2 + (5/2) a_5 b_1) b_1 + \nonumber \\
&& \alpha_2 \mu^3 b_0 (60 a_4 b_3  + 35 a_5 b_2) + \alpha_2 \mu^3 b_1 (42 a_3 b_3 + 24 a_4 b_2 + 10 a_5 b_1) + \nonumber \\
&& \alpha_2 \mu^3 b_2 (27 a_2 b_3 + 15 a_3 b_2 + 6 a_4 b_1) + \alpha_2 \mu^3 b_3 (15 a_1 b_3 + 8 a_2 b_2 + 3 a_3 b_1) + \nonumber \\
&& 3 \alpha_3 \mu (2 b_0 b_2 + b_1^2) b_3 + 2 \alpha_3 \mu (2 b_0 b_3 + 2 b_1 b_2) b_2 + \alpha_3 \mu (2 b_1 b_3 + b_2^2) b_1 + \nonumber \\
&& \alpha_4 \mu^3 (60 a_4 b_3 + 35 a_5 b_2) + \alpha_5 \mu^5 (2490 a_5 b_3 a_1 + 2040 a_4 b_3 a_2 + 1015 a_5 b_2 a_2 + \nonumber \\
&& 945 a_3^2 b_3 + 840 a_4 b_2 a_3 + 120 a_4^2 b_1 + (555/2) a_5 b_1 a_3) \nonumber \\
0&=& 4 \alpha_0 \mu b_1 b_3 + 2 \alpha_0 \mu b_2^2 + 3 \alpha_1 \mu^3 (6 a_0 b_3 + 3 a_1 b_2 + a_2 b_1) b_3 + 2 \alpha_1 \mu^3 ((15/2) a_1 b_3 + \nonumber \\
&& 4 a_2 b_2 + (3/2) a_3 b_1) b_2 + \alpha_1 \mu^3 (9 a_2 b_3 + 5 a_3 b_2 + 2 a_4 b_1) b_1 + \alpha_2 \mu^3 b_0 (42 a_3 b_3 + \nonumber \\
&& 24 a_4 b_2 + 10 a_5 b_1) + \alpha_2 \mu^3 b_1 (27 a_2 b_3 + 15 a_3 b_2 + 6 a_4 b_1) + \alpha_2 \mu^3 b_2 (15 a_1 b_3 + 8 a_2 b_2 + \nonumber \\
&& 3 a_3 b_1) + \alpha_2 \mu^3 b_3 (6 a_0 b_3 + 3 a_1 b_2 + a_2 b_1) + 6 \alpha_3 \mu b_0 b_1 b_3 + 2 \alpha_3 \mu (2 b_0 b_2 + b_1^2) b_2 + \nonumber \\
&& \alpha_3 \mu (2 b_0 b_3 + 2 b_1 b_2) b_1 + \alpha_4 \mu^3 (42 a_3 b_3 + 24 a_4 b_2 + 10 a_5 b_1) + \alpha_5 \mu^5 (1680 a_0 a_5 b_3 + \nonumber \\
&& 1260 a_1 a_4 b_3 + 660 a_1 a_5 b_2 + 1050 a_2 a_3 b_3 + 480 a_2 a_4 b_2 + 170 a_2 a_5 b_1 + 210 a_3^2 b_2 + \nonumber \\
&& 120 a_3 a_4 b_1) \nonumber \\
\end{eqnarray*}
\begin{eqnarray}\label{d35x}
&=& (3 (\nu + \mu)) b_3 + 3 \alpha_0 \mu b_0 b_3 + 3 \alpha_0 \mu b_1 b_2 + 3 \alpha_1 \mu^3 (2 a_0 b_2 + (1/2) a_1 b_1 ) b_3 + \nonumber \\
&& 2 \alpha_1 \mu^3 (6 a_0 b _3 + 3 a_1 b_2 + a_2 b_1) b_2 + \alpha_1 \mu^3 ((15/2) a_1 b_3 + 4 a_2 b_2 + \nonumber \\
&& (3/2) a_3 b_1) b_1 + \alpha_2 \mu^3 b_0 (27 a_2 b_3 + 15 a_3 b_2 + 6 a_4 b_1) + \alpha_2 \mu^3 b_1 (15 a_1 b_3 + 8 a_2 b_2 + 3 a_3 b_1) + \nonumber \\
&& \alpha_2 \mu^3 b_2 (6 a_0 b_3 + 3 a_1 b_2 + a_2 b_1) + 3 \alpha_3 \mu b_0^2 b_3 + 4 \alpha_3 \mu b_0 b_1 b_2 + \alpha_3 \mu (2 b_0 b_2 + b_1^2) b_1 + \nonumber \\
&& \alpha_4 \mu^3 (27 a_2 b_3 + 15 a_3 b_2 + 6 a_4 b_1) + \alpha_5 \mu^5 (243 a_2^2 b_3 + (45/2) a_3^2 b_1 + 756 a_4 b_3 a_0 + \nonumber \\
&& 420 a_5 b_2 a_0 + (1107/2) a_3 b_3 a_1 + 270 a_4 b_2 a_1 + 105 a_5 b_1 a_1 + 195 a_3 b_2 a_2 + 60 a_4 b_1 a_2) \nonumber \\
0&=& (2 (\nu + \mu)) b_2 + 2 \alpha_0 \mu b_0 b_2+ \alpha_0 \mu b_1^2 + 2 \alpha_1 \mu^3 (2 a_0 b_2 + (1/2) a_1 b_1) b_2 + \nonumber \\
&& \alpha_1 \mu^3 (6 a_0 b_3 + 3 a_1 b_2 + a_2 b_1) b_1 + \alpha_2 \mu^3 b_0 (15 a_1 b_3   + 8 a_2 b_2 + 3 a_3 b_1) + \nonumber \\
&& \alpha_2 \mu^3 b_1 (6 a_0 b_3 + 3 a_1 b_2 + a_2 b_1) + 2 \alpha_3 \mu b_0^2 b_2 + 2 \alpha_3 \mu b_0 b_1^2 + \alpha_4 \mu^3 (15 a_1 b_3 + 8 a_2 b_2 + 3 a_3 b_1) + \nonumber \\
&& \alpha_5 \mu^5 (270 a_0 a_3 b_3 + 144 a_0 a_4 b_2 + 60 a_0 a_5 b_1 + 195 a_1 a_2 b_3 + 84 a_1 a_3 b_2 + 30 a_1 a_4 b_1 +\nonumber \\
&& 32 a_2^2 b_2 + 15 a_2 a_3 b_1) \nonumber \\
0&=& (\nu+\mu) b_1 + \alpha_0 \mu b_0 b_1 + \alpha_1 \mu^3 (2 a_0 b_2 + (1/2) a_1 b_1) b_1 + \alpha_2 \mu^3 b_0 (6 a_0 b_3 + 3 a_1 b_2 + a_2 b_1 ) + \nonumber \\
&& \alpha_3 \mu b_0^2 b_1 + \alpha_4 \mu^3 (6 a_0 b_3 + 3 a_1 b_2 + a_2 b_1) + \alpha_5 \mu^5 ((45/2) a_1^2 b_3 + a_2^2 b_1 + (9/2) a_3 b_1 a_1 + \nonumber \\
&& 15 a_2 b_2 a_1 + 12 a_4 b_1 a_0 + 60 a_2 b_3 a_0 + 30 a_3 b_2 a_0)
\nonumber \\
\end{eqnarray}

\end{appendix}

\end{document}